\tikzset{->-/.style={decoration={
  markings,
  mark=at position #1 with {\arrow{>}}},postaction={decorate}}}
\tikzset{->-/.default=0.5}
\pgfplotsset{compat=1.10}
\newcommand{\CC}{{\mathbb C}}
\newcommand{\NN}{{\mathbb N}}
\newcommand{\RR}{{\mathbb R}}
\newcommand{\ZZ}{{\mathbb Z}}
\newcommand{\HH}{{\mathbb H}}
\newcommand{\EE}{{\mathbb E}}
\theoremstyle{nonumberplain}  
\newtheorem{proof}{Proof} 
\theoremstyle{plain}  
\newtheorem{proposition}{Proposition}[section]  
\newtheorem{corollary}[proposition]{Corollary}  
\newtheorem{lemma}[proposition]{Lemma}  
\newtheorem{theorem}{Theorem}   
\newtheorem{remark}[proposition]{Remark}
\theoremstyle{nonumberplain}
\begin{document}

\title{Gaplessness of Landau Hamiltonians on hyperbolic half-planes via coarse geometry}

\author[1]{Matthias Ludewig}
\author[2]{Guo Chuan Thiang}
\affil[1]{University of Regensburg, Germany}
\affil[2]{Beijing International Center for Mathematical Research, Peking University, China}


\maketitle

\begin{abstract}
We use coarse index methods to prove that the Landau Hamiltonian on the hyperbolic half-plane, and even on much more general imperfect half-spaces, has no spectral gaps. Thus the edge states of hyperbolic quantum Hall Hamiltonians completely fill up the gaps between Landau levels, just like those of the Euclidean counterpart. 
  \end{abstract}

\section*{Introduction}

Let $X$ be either the Euclidean plane $\EE$ or the hyperbolic plane $\HH$. By a uniform magnetic field of strength $\theta\in\RR$ perpendicular to $X$, we mean the closed two-form $F_\theta=\theta\cdot\omega$, where $\omega$ denotes the normalised invariant (under the isometry group) volume form on $X$. Let $\mathcal{L}_\theta$ be the trivial Hermitian line bundle $\mathcal{L}_\theta=X\times\CC$ with connection 1-form\footnote{While the gauge group ${\rm U}(1)$ has Lie algebra $\mathfrak{u}(1)\cong i\RR$, it is customary in physics to use a real connection 1-form. There is also a sign choice depending on the electric charge $q$, which enters in the formula $d-iqA_\theta$ for the covariant derivative (``minimal coupling''). With suitable units, $q=-1$ for an electron, for instance.} $A_\theta$ and curvature $dA_\theta=F_\theta$. The \emph{Landau Hamiltonian} $H_\theta$ is the connection Laplacian on $\mathcal{L}_\theta$,
\begin{equation}
H_\theta:=(d-iA_\theta)^*(d-iA_\theta).\label{eqn:Landau.Hamiltonian}
\end{equation}
A different choice of $A_\theta$, or \emph{gauge}, gives rise to a unitarily equivalent $H_\theta$ with the same spectrum.

For $\theta =0$, we recover the standard Laplacian $H_0 = \Delta$, the spectrum of which is well-known to be $[0,\infty)$ in the Euclidean, respectively $[\frac{1}{4},\infty)$ in the hyperbolic case, see \cite{McKean}.
For non-zero $\theta$, the spectrum of $H_\theta$ differs dramatically from that of $H_0$: In the Euclidean case, the spectrum of $H_\theta$ is an infinite sequence
\begin{equation*}
\lambda_{m,\theta}=(2m+1)|\theta|,\qquad m=0,1,\ldots,
\end{equation*}
of infinitely-degenerate isolated eigenvalues, called \emph{Landau levels} \cite{Landau} (see Fig.\ \ref{fig:euc.raising.lowering}). In the hyperbolic plane case, $H_\theta$ has a finite sequence of eigenvalues
\begin{equation*}
\lambda_{m,\theta}=(2m+1)|\theta |-m(m+1),\quad m=0,1,\ldots m_{\rm max}< |\theta |-\frac{1}{2},
\end{equation*}
as well as a continuous spectrum $[\frac{1}{4}+\theta^2,\infty)$, see \cite{Comtet}. Thus some isolated Landau levels occur once $|\theta|>\frac{1}{2}$ (see Fig.\ \ref{fig:hyp.raising.lowering}).
 
\medskip

Our main result is the following.

\begin{theorem}\label{thm:main.theorem.intro}
Let $X$ be the hyperbolic or Euclidean plane, $W$ be the closed half-plane lying on one side of a geodesic. Let $H_{\theta, W}$ be a Landau Hamiltonian on $W$, defined by either Dirichlet or Neumann boundary conditions. Then the spectrum of $H_{\theta,W}$ has no gaps above the lowest Landau level $\lambda_{0,\theta}=|\theta |$. 
\end{theorem}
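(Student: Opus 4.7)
The plan is to argue by contradiction, combining functional-calculus decay with the coarse bulk-edge K-theoretic machinery developed earlier in the paper. Suppose for contradiction that $(\mu_1, \mu_2) \subseteq (|\theta|, \infty)$ is a spectral gap of $H_{\theta, W}$. Since the Weyl criterion applied to bulk states deep inside $W$ gives $\sigma(H_\theta) \subseteq \sigma(H_{\theta, W})$, the interval $(\mu_1, \mu_2)$ is also a gap of the bulk Hamiltonian $H_\theta$. Fix $\mu \in (\mu_1, \mu_2)$; both $P := \chi_{(-\infty, \mu]}(H_\theta)$ and $P_W := \chi_{(-\infty, \mu]}(H_{\theta, W})$ are then well-defined bounded projections, and Combes--Thomas / Helffer--Sj\"ostrand gap-decay estimates ensure they lie in the appropriate Roe C*-algebras of $X$ and $W\subseteq X$. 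The choice of Dirichlet versus Neumann boundary condition affects the operator $H_{\theta, W}$ itself but not the coarse-geometric class of $P_W$ modulo the edge ideal, since the two half-plane realisations differ only by a boundary-localised perturbation.

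I would then set up the bulk-edge short exact sequence of Roe C*-algebras
\begin{equation*}
0 \longrightarrow I_\partial \longrightarrow C^*(W \subseteq X) \longrightarrow C^*(X) \longrightarrow 0,
\end{equation*}
where $C^*(W \subseteq X)$ consists of finite-propagation, locally-compact operators on $L^2(W)$ supported in bounded neighbourhoods of $W\subseteq X$, and $I_\partial$ is the ideal of such operators whose supports lie in a bounded neighbourhood of the boundary geodesic $\partial W$. Passing to K-theory yields a boundary map $\partial \colon K_0(C^*(X)) \to K_1(I_\partial)$ which encodes the bulk-edge correspondence.

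The crux of the proof is to show $\partial[P] \neq 0$. Since $\mu > |\theta|$, the projection $P$ dominates the lowest Landau projection $P_0 := \chi_{\{|\theta|\}}(H_\theta)$. In the Euclidean setting $P_0$ is the Bergman projection onto holomorphic sections of $\mathcal{L}_\theta$, whose coarse class under $\partial$ matches the degree-one Toeplitz index; in the hyperbolic setting one obtains an analogous statement by factoring $H_\theta - |\theta|$ into raising/lowering operators and identifying $P_0$ with the kernel projection of a Dirac-type operator on $\HH$. Pairing $\partial[P]$ against a natural trace on $I_\partial$ given by integration of matrix elements along $\partial W$ then yields a nonzero integer — morally, the number of Landau levels below $\mu$, counted with appropriate multiplicity.

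The contradiction is now immediate: $P_W$ extends (by zero outside $W$) to an element of $C^*(W \subseteq X)$ whose image in the quotient $C^*(X)$ agrees with $P$ modulo $I_\partial$, since their difference is concentrated near $\partial W$ by the same gap-decay estimates. Hence $[P_W]$ is a K-theory lift of $[P]$, forcing $\partial [P] = 0$ and contradicting the nontriviality established above. The main obstacle will be the nonvanishing of $\partial[P]$ in the hyperbolic case: flat Fourier analysis is unavailable, so the Dirac-type reformulation via raising/lowering operators must be married to a genuinely coarse-geometric index theorem — presumably the one furnished earlier in the paper, which is also what permits the extension from a geodesic boundary to the ``imperfect half-spaces'' of the abstract.
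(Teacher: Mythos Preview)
Your overall strategy matches the paper's: set up the bulk--boundary exact sequence of Roe algebras, observe that a spectral gap of $H_{\theta,W}$ would furnish a genuine projection lift of the bulk spectral projection (forcing $\partial[P]=0$), and derive a contradiction by showing $\partial[P]\neq 0$. The identification of the lowest Landau eigenspace with the kernel of a twisted Dirac operator via raising/lowering operators is exactly the mechanism the paper uses, and your reduction of higher Landau levels to the lowest one is also what the paper does.

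The substantive difference is in how non-vanishing of $\partial[P]$ is established. You propose pairing $\partial[P]$ with a boundary trace on the edge ideal; the paper uses no trace at all and argues purely $K$-theoretically. It first shows, via the coarse Baum--Connes isomorphism, that $K_0(C^*(X))\cong\ZZ$ with generator $\mathrm{Ind}(\slashed{D})$, and that each Landau projection realises this generator (through the Dirac-kernel identification you mention). Then, using the Mayer--Vietoris sequence for the coarsely excisive partition $X=W\cup W'$ by reflected half-planes, together with $K_i(C^*(\partial W))\cong K_i(C^*(\RR))$, it forces $K_i(C^*(W))=0$ and hence that $\partial_0:K_0(C^*(X))\to K_1(C^*_W(\partial W))$ is an \emph{isomorphism} $\ZZ\to\ZZ$. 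Your trace route, by contrast, would require constructing and controlling a densely-defined trace on the hyperbolic edge Roe algebra and checking it pairs nontrivially with the coarse Dirac index --- not impossible, but considerably more analytic, and the paper's approach avoids it entirely. The Mayer--Vietoris computation is also what makes the extension to ``imperfect half-spaces'' immediate, since only the coarse type of $\partial W$ enters.

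One small caution: your Weyl-criterion justification of $\sigma(H_\theta)\subseteq\sigma(H_{\theta,W})$ is correct here but not automatic; the paper obtains it from the equivariant refinement of the exact sequence (its Remark~\ref{rem:equivariant.advantage}), noting that for non-invariant Hamiltonians this inclusion can fail.
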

 
Thus, the Landau Hamiltonian $H_\theta$ exhibits the \emph{gap-filling phenomenon}: all the spectral gaps between its Landau levels (and the continuous spectrum in the hyperbolic case) get completely filled up after passing to its half-plane version $H_{\theta,W}$. 

While  for simplicity, we formulated Thm.~\ref{thm:main.theorem.intro} for the case that $\partial W$ is a geodesic, a similar statement also holds for ``imperfect'' half-spaces $W$ which are, for instance, merely quasi-isometric to a standard half-plane, see Remark \ref{RemarkImperfekt}. More general boundary conditions can be treated as well; see \S\ref{SectionMainTheorem} for a more general version of Thm.~\ref{thm:main.theorem.intro}. Our methods are also applicable to higher-dimensional $X$; see Remark \ref{rem:higher.dimensions}.
 

In the Euclidean case, we showed in \cite{LTcobordism} that Theorem \ref{thm:main.theorem.intro} holds (also for more general $W$), because the Landau Hamiltonian encounters a certain equivariant coarse index obstruction to maintaining the spectral gaps between its Landau levels. It is also possible to directly compute the Dirichlet and Neumann spectrum for $H_{\theta,W}$ and verify the gap-filling phenomenon \cite{Pule,BMR}. However, a direct spectral computation for $H_{\theta,W}$ in the hyperbolic case would be much more difficult, if at all possible.

For this paper, we use a conceptually simpler \emph{nonequivariant} coarse index obstruction, which serves the same purpose as far as demonstrating gap-filling is concerned (but see Remark \ref{rem:equivariant.advantage}). We compute that this obstruction occurs for both the Euclidean and hyperbolic plane Landau Hamiltonian, thereby proving Theorem \ref{thm:main.theorem.intro} \emph{without having to directly solve the spectral problem for $H_{\theta,W}$, or to assume special boundary geometries and conditions}. 

\medskip

Our result is motivated by physicists' investigations of so-called \emph{topological phases}. The intuition is that each Landau level is in some sense ``topologically non-trivial'', and leaves a robust signature on the material boundary through gap-filling and boundary-localised states via the \emph{bulk-boundary correspondence} principle. This principle has been theoretically and experimentally verified for many physical systems in Euclidean geometry. The possibility of quantum Hall effects and other topological phases on $\HH$ was contemplated in \cite{Comtet, CHMM, MThyperbolic}. However, it has remained an open question, even for the paradigmatic Landau Hamiltonian, whether the gap-filling phenomenon occurs in hyperbolic (or more general) geometries. With the ability to effectively simulate dynamics in hyperbolic geometry \cite{QuantumSimulation, Kollar}, this has become a pertinent question to address, and our Theorem \ref{thm:main.theorem.intro} answers this in the affirmative.

\medskip

\noindent {\bf Outline.} 
In \S\ref{sec:Roe.algebras}, we construct an exact sequence of Roe algebras, corresponding to operators $H_X$ acting on a manifold $X$ and $H_W$ acting on a subset $W\subset X$. We explain how spectral gaps of $H_X$ encounter gap-filling in the passage to $H_W$, whenever the spectral projection encounters a coarse index obstruction. In \S\ref{sec:Landau.Dirac}, we explain the geometric origin of the relationship between Landau Hamiltonians and twisted Dirac operators, and show that the Landau level eigenspaces are kernels of the latter. In \S\ref{sec:coarse.Landau}, we show that the Landau spectral projections realise the non-vanishing coarse index of the twisted Dirac operator. This coarse index forces the spectral gap above each Landau level to be filled when passing from $H_X=H_\theta$ to $H_W=H_{\theta,W}$.

\section{Exact sequences of Roe algebras associated to subsets of manifolds}\label{sec:Roe.algebras}

Let $X$ be a Riemannian manifold, and let $W \subset X$ be a regular closed subset, meaning that it is equal to the closure of its interior. In this section, we will construct a \emph{boundary map} in $K$-theory,
\begin{equation*}
  \partial_i: K_i(C^*(X)) \longrightarrow K_{i-1}(C^*(\partial W)),
\end{equation*}
which connects the $K$-theory of the Roe algebra of $X$ (``bulk Roe algebra''), to the $K$-theory of the Roe algebra of $\partial W$ (``boundary Roe algebra''). It turns out that there are various descriptions for this map.

We start with a review of Roe algebras. A bounded operator $T \in \mathcal{B}(L^2(X))$ is called of \emph{finite propagation} if there exists $R>0$ such that $\mathrm{supp}(T\varphi)$ is contained in the $R$-ball around $\mathrm{supp}(\varphi)$, for all $\varphi \in C_0(X)$, the space of continuous functions on $X$ vanishing at infinity. The norm closure in $\mathcal{B}(L^2(X))$ of the algebra of all pseudolocal\footnote{An operator $T$ is pseudolocal if $[\varphi, T]$ is compact for all $\varphi \in C_0(X)$, see \S5 of \cite{HRcoarse}.} and finite propagation operators is denoted by $D^*(X)$.
$T$ is called \emph{locally compact} if $T\varphi$ and $\varphi T$ are compact operators for every $\varphi \in C_0(X)$. The Roe algebra of $X$ can then be defined as the $C^*$-algebra
\begin{equation*}
  C^*(X) := \overline{\{T \in D^*(X) \mid T ~\text{locally compact}\}} \subseteq D^*(X) \subseteq \mathcal{B}(L^2(X)).
\end{equation*}
Let $S \subset X$ be an arbitrary closed subset. An operator $T \in \mathcal{B}(L^2(X))$ is \emph{supported near} $S$ if there exists $R>0$ such that $\varphi T = T \varphi = 0$ whenever $\varphi \in C_0(X)$ with $\mathrm{dist}(\mathrm{supp}(\varphi), S) \geq R$. The \emph{localized Roe algebra} $C^*_X(S)$ is the norm closure
\begin{equation*}
  C^*_X(S) := \overline{\{ T \in C^*(X) \mid T ~\text{supported near}~S  \}} \subseteq C^*(X) \subseteq \mathcal{B}(L^2(X)).
\end{equation*}
It follows from coarse invariance of Roe algebras \cite[Thm.~2.9]{EwertMeyer} and continuity of $K$-theory that always $K_i(C^*_X(S)) \cong K_i(C^*(S))$. However, the advantage of using the localized Roe algebra over $C^*(S)$ is that $C^*_X(S)$ is a two-sided ideal  in $C^*(X)$, hence leads to a six-term exact sequence in $K$-theory.

\begin{remark}
  One frequently encounters operators that act on sections of a vector bundle $\mathcal{V}$ on $X$ instead of scalar functions, such as the Dirac operator. For those operators, one can replace the space $L^2(X)$ by the space $L^2(X, \mathcal{V})$ of sections. More generally, it is often convenient to replace $L^2(X)$ by an abstract Hilbert space $\mathcal{H}$ together with a $*$-representation of $C_0(X)$, for which a Roe algebra can be defined analogously. It turns out that any such choice of $\mathcal{H}$ gives rise to an isomorphic Roe algebra, provided the representation is \emph{ample}, meaning that no non-zero $f \in C_0(X)$ acts as a compact operator (this condition is satisfied for $L^2$ spaces on manifolds because non-trivial multiplication operators are never compact). Moreover, the $K$-theory groups of Roe algebras corresponding to two ample $C_0(X)$-modules $\mathcal{H}$, $\mathcal{H}^\prime$ are \emph{canonically} isomorphic \cite[Thm.~2.1]{EwertMeyer}. However, for our purposes, it suffices to just always take $\mathcal{H} = L^2(X)$.
\end{remark}

\subsection{The localization sequence} \label{SectionLocalizationSequence}

Let $X$ be a Riemannian manifold, and let $W \subset X$ be a regular closed subset. Using the above definitions for $W$ and $\partial W$, we obtain the short exact sequence
\begin{equation} \label{ShortExactW}
  0 \longrightarrow C^*_W(\partial W) \longrightarrow C^*(W) \longrightarrow C^*(W)/C^*_W(\partial W) \longrightarrow 0
\end{equation}
and the corresponding exact six-term sequence
\begin{equation} \label{SixTerm1}
\begin{tikzcd}
  K_0(C^*_W(\partial W)) \ar[r] & K_0(C^*(W)) \ar[r] & K_0(C^*(W)/C^*_W(\partial W)) \ar[d, "\delta_1"] \\
  K_1(C^*(W)/C^*_W(\partial W)) \ar[u, "\delta_0"] & \ar[l] K_1(C^*(W)) & \ar[l] K_1(C^*_W(\partial W))
\end{tikzcd}
\end{equation}
in $K$-theory. We remark that above, we defined the Roe algebra of a Riemannian manifold only, while $W$ is not a manifold in general, but the same definitions work in this slightly more general case, since $W$ is regular, hence inherits a non-degenerate measure by restriction.

Let $\Pi: L^2(X) \rightarrow L^2(W)$ be the orthogonal projection, and $\Pi^*:L^2(W)\rightarrow L^2(X)$ the inclusion. There are canonical maps $e: C^*(W) \rightarrow C^*(X)$ and $r: C^*(X) \rightarrow C^*(W)$ given by $T  \mapsto \Pi^* T \Pi$ and  $T \mapsto \Pi T \Pi^*$, respectively. Since $\Pi \Pi^*$ is the identity on $L^2(W)$, the \emph{extension-by-zero map} $e$ is a $*$-homomorphism. In contrast, the \emph{restriction map} $r$ is not: it is $*$-preserving,  $r(T^*) = r(T)^*$, but it is not multiplicative in general. Nevertheless, we have the following lemma.

\begin{lemma} \label{LemmaRestrictionMap}
For $T, T^\prime\in C^*(X)$, we have $r(TT^\prime) - r(T)r(T^\prime) \in C^*_W(\partial W)$, hence the restriction map $r$ descends to a $*$-homomorphism $\tilde{r}: C^*(X) \rightarrow C^*(W)/C^*_W(\partial W)$.
\end{lemma}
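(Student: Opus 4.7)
The plan is to reduce the multiplicativity defect of $r$ to a single explicit operator which is visibly supported near $\partial W$. Using the identities $\Pi\Pi^* = \mathrm{id}_{L^2(W)}$ and $\Pi^*\Pi = M_{\chi_W}$ (multiplication by the characteristic function of $W$ on $L^2(X)$), a one-line expansion gives
$$r(TT') - r(T)r(T') \;=\; \Pi T T' \Pi^* - \Pi T \Pi^*\Pi T' \Pi^* \;=\; \Pi\, T\, M_{\chi_{X\setminus W}}\, T'\, \Pi^*.$$
Writing $S$ for this operator, the task reduces to showing $S \in C^*_W(\partial W)$.

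By density of the locally compact finite-propagation operators in $C^*(X)$, bilinearity, continuity of multiplication, and norm-closedness of $C^*_W(\partial W)$, I may assume that $T$ and $T'$ both have propagation bounded by some $R>0$. Two things then have to be checked. First, $S \in C^*(W)$: multiplication by the bounded function $\chi_{X\setminus W}$ is local, so $S$ has finite propagation bounded by $2R$; local compactness of $S$ follows from that of $T$ and $T'$ together with boundedness of $M_{\chi_{X\setminus W}}$, and compression by $\Pi$ and $\Pi^*$ preserves both properties on $L^2(W)$. Second, $S$ is supported near $\partial W$: for $\varphi \in C_0(W)$ with $\mathrm{dist}(\mathrm{supp}\,\varphi, \partial W) > R$, let $\tilde\varphi$ denote the extension of $\varphi$ by zero to $X$, so that $\Pi^*\varphi = M_{\tilde\varphi}\Pi^*$ and $\varphi\Pi = \Pi M_{\tilde\varphi}$. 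Then
$$S\varphi \;=\; \Pi\, T\, \bigl(M_{\chi_{X\setminus W}}\, T'\, M_{\tilde\varphi}\bigr)\, \Pi^*,$$
and finite propagation of $T'$ together with $\mathrm{dist}(X\setminus W, \mathrm{supp}\,\tilde\varphi) > R$ forces the bracketed factor to vanish. A symmetric argument using finite propagation of $T$ shows $\varphi S = 0$. Hence $S$ is supported in an $R$-neighborhood of $\partial W$ and lies in $C^*_W(\partial W)$.

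The one technical point is that finite propagation is usually formulated via $C_0(X)$ test functions, whereas $\chi_{X\setminus W}$ and $\tilde\varphi$ need not be continuous. I expect this to be the only real obstacle; it is handled by the standard extension of the finite-propagation vanishing to $L^\infty$ multipliers with essentially disjoint supports, e.g.\ by approximating $\chi_{X\setminus W}$ and $\tilde\varphi$ by continuous bump functions with slightly enlarged supports (still at distance $> R-\epsilon$, as one has strict inequality $\mathrm{dist}(\mathrm{supp}\,\varphi, \partial W) > R$) and sending $\epsilon \to 0$. Once this is settled, the displayed computation immediately yields $S \in C^*_W(\partial W)$. Since $r$ is already $*$-preserving, this $*$-preservation passes to the quotient, and $r$ descends to the $*$-homomorphism $\tilde r: C^*(X) \to C^*(W)/C^*_W(\partial W)$ as claimed.
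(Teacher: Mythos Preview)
Your proof is correct and follows essentially the same approach as the paper: both write the defect as $\Pi T(\mathrm{id}-\Pi^*\Pi)T'\Pi^*$, reduce to finite-propagation $T,T'$ by density, and use that $(\mathrm{id}-\Pi^*\Pi)T'\Pi^*\varphi=0$ whenever $\mathrm{supp}(\varphi)$ is far from $\partial W$. One minor remark: your technical worry about $\tilde\varphi$ is unnecessary, since under the hypothesis $\mathrm{dist}(\mathrm{supp}\,\varphi,\partial W)>R>0$ the extension by zero $\tilde\varphi$ vanishes in a neighborhood of $\partial W$ and hence lies in $C_0(X)$; the propagation bound then directly gives $\mathrm{supp}(T'\tilde\varphi)\subset \mathring{W}$, so $M_{\chi_{X\setminus W}}T'M_{\tilde\varphi}=0$ without any approximation of $\chi_{X\setminus W}$.
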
 

\begin{proof}
If the propagation of $T^\prime$ is less than $R>0$, then for all $\varphi \in C_0(W)$ such that $\mathrm{dist}(\mathrm{supp}(\varphi), \partial W) \geq R$, the composition $(\mathrm{id}-\Pi^*\Pi) T^\prime \varphi$ is zero. Therefore, under the same assumption,
\begin{equation*}
(r(TT^\prime) - r(T)r(T^\prime))\varphi = (\Pi T T^\prime \Pi^* - \Pi T\Pi^*\Pi T^\prime \Pi^*)\varphi = \Pi T(\mathrm{id}-\Pi^*\Pi) T^\prime\Pi^* \varphi = 0.
\end{equation*}
Taking adjoints, the same argument shows that $\varphi(r(TT^\prime) - r(T)r(T^\prime)) = 0$, provided that also the propagation of $T$ is less than $R$. Hence $r(TT^\prime) - r(T)r(T^\prime) \in C^*(W)$ is supported near $\partial W$ for finite propagation operators $T$ and $T^\prime$. For general $T$ and $T^\prime$, the statement follows from continuity of $r$, since finite propagation operators are dense in $C^*(X)$.
\end{proof}

Precomposing the exponential map of the six-term sequence \eqref{SixTerm1} with the map in $K$-theory induced by $\tilde{r}$, we get a homomorphism
\begin{equation} \label{BoundaryMap1}
  \partial_i := \delta_i \circ \tilde{r}_*: K_i(C^*(X)) \longrightarrow K_{i+1}(C^*_W(\partial W)) \cong K_{i+1}(C^*(\partial W)).
\end{equation}
In the next section, we will identify this map with the boundary map of a Mayer--Vietoris sequence, under suitable conditions.

\subsection{The Mayer--Vietoris sequence}

Let $X$ be a Riemannian manifold and let $W, W^\prime$ be a partition of $X$, by which we mean two regular closed subsets of $X$ such that $W \cup W^\prime = X$ and $W \cap W^\prime = \partial W$. We say that the partition is \emph{coarsely excisive}, if for all $R>0$, there exists $S>0$ such that if $x \in X$ satisfies both $d(x, W) \leq R$ and $d(x, W^\prime) \leq R$, then $d(x, \partial W) \leq S$, see Fig.\ \ref{fig:partitions} for some examples.

\begin{figure}
\begin{tikzpicture}
\draw[name path=A] (1,0) --(1,4);
\draw[white, name path =B] (3,0) -- (3,4);
\tikzfillbetween[of=A and B]{opacity=0.2};
\node at (0.3,2) {$W^\prime$};
\node at (2,2) {$W$};

\draw[name path=A] (4,2.5) .. controls (5,2.5) .. (5,4) ;
\draw[name path=B] (5.5,0) .. controls (5.5,1.5) .. (6.5,1.5);
\draw[white, name path=C] (4,0) -- (5.5,0);
\draw[white, name path=D] (5,4) -- (6.5,4);
\tikzfillbetween[of=D and B]{opacity=0.2};
\tikzfillbetween[of=A and C]{opacity=0.2};
\node at (4.3,3.5) {$W^\prime$};
\node at (6.2,0.5) {$W^\prime$};
\node at (5,2) {$W$};

\draw[name path=A] (11,3) -- (9,3).. controls (7,3.1) and (7,0.9) ..(9,1) -- (11,1);
\draw[white, name path =B] (11,0) -- (11,4);
\tikzfillbetween[of=A and B]{opacity=0.2};
\node at (8,3.3) {$W^\prime$};
\node at (9,2.5) {$W$};

\end{tikzpicture}~~~~~
\begin{tikzpicture}
\draw[name path=A] (2,0) --(2,2)--(3,2)--(3,4);
\draw[white, name path =B] (4,0) -- (4,4);
\draw[name path=C] (1,4) --(1,2)--(2,2);
\draw (1,3.8)--(1.2,4);
\draw (1,3.6)--(1.4,4);
\draw (1,3.4)--(1.6,4);
\draw (1,3.2)--(1.8,4);
\draw (1,3.0)--(2.0,4);
\draw (1,2.8)--(2.2,4);
\draw (1,2.6)--(2.4,4);
\draw (1,2.4)--(2.6,4);
\draw (1,2.2)--(2.8,4);
\draw (1,2)--(3,4);
\draw (1.2,2)--(3,3.8);
\draw (1.4,2)--(3,3.6);
\draw (1.6,2)--(3,3.4);
\draw (1.8,2)--(3,3.2);
\draw (2,2)--(3,3);
\draw (2.2,2)--(3,2.8);
\draw (2.4,2)--(3,2.6);
\draw (2.6,2)--(3,2.4);
\draw (2.8,2)--(3,2.2);
\tikzfillbetween[of=A and B]{opacity=0.2};
\node at (1,1) {$W^\prime$};
\node at (3,1) {$W$};

\end{tikzpicture}
\caption{The first diagram shows the standard partition of the plane $X$ by the half-plane $W$ on one side of a geodesic (for $X$ the hyperbolic plane, we are considering $y>0$). Other possible coarsely excisive partitions $X=W\cup W^\prime$ are shown in the second and third diagram. However, the third partition does not satisfy the condition that the distance function to $W\cap W^\prime=\partial W$ is unbounded. The fourth diagram depicts a non-coarsely excisive partition of a subset of the plane (the hatched area is not part of the subset).}\label{fig:partitions}
\end{figure}
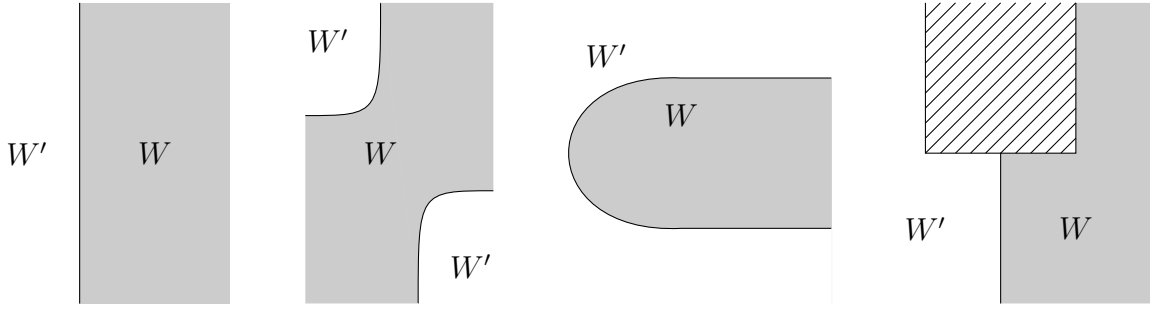

Let $i$ and $i^\prime$ be the inclusions of $C^*_X(\partial W)$ into $C^*_X(W)$ and $C^*_X(W^\prime)$, and let $j$, $j^\prime$ be the further inclusions into $C^*(X)$. Then we have the following Mayer--Vietoris exact sequence, see \cite[Corollary~3]{EwertMeyer} and \cite[\S5]{HRY},
\begin{equation} \label{MayerVietoris}
\begin{tikzcd}
K_0(C^*_X(\partial W)) \ar[r, "(i_*{,} i^\prime_*)"] & K_0(C^*_X(W)) \oplus K_0(C^*_X(W^\prime)) \ar[r, "j_* - j^\prime_*"] & K_0(C^*(X)) \ar[d, "\partial_0"] \\
K_1(C^*(X)) \ar[u, "\partial_1"] & \ar[l, "j_* - j^\prime_*"] K_1(C^*_X(W)) \oplus K_1(C^*_X(W^\prime)) &\ar[l, "(i_*{,} i^\prime_*)"] K_1(C^*_X(\partial W))
\end{tikzcd}
\end{equation}
The boundary maps $\partial_i$ are given as follows.

\begin{proposition}
Suppose that the partition $W$, $W^\prime$ of $X$ is coarsely excisive.
Then after identifying $K_i(C^*_W(\partial W)) \cong K_i(C^*_X(\partial W))$ with the extension-by-zero map $e_*$, the boundary maps of the Mayer--Vietoris sequence \eqref{MayerVietoris} coincide with the maps $\partial_i$ defined in \eqref{BoundaryMap1}. 
\end{proposition}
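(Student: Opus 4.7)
The plan is to identify both boundary maps as exponential maps of comparable short exact sequences, linked by naturality of the exponential map.

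The central object is the morphism of short exact sequences
\begin{equation*}
\begin{tikzcd}[column sep=small]
0 \ar[r] & C^*_W(\partial W) \ar[r] \ar[d, "e"] & C^*(W) \ar[r] \ar[d, "e"] & C^*(W)/C^*_W(\partial W) \ar[r] \ar[d, "\bar{e}"] & 0 \\
0 \ar[r] & C^*_X(\partial W) \ar[r] & C^*_X(W) \ar[r] & C^*_X(W)/C^*_X(\partial W) \ar[r] & 0
\end{tikzcd}
\end{equation*}
in which $e(T) = \Pi^* T \Pi$ is the extension-by-zero map. It takes values in $C^*_X(W)$ since $\Pi^* T \Pi$ vanishes off $W$, and restricts to the ideals since extension-by-zero preserves being supported near $\partial W$. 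By coarse invariance of Roe algebras \cite[Thm.~2.9]{EwertMeyer} the two leftmost $e_*$ are $K$-theory isomorphisms, and then the five lemma forces $\bar{e}_*$ to be one as well.

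Next, I use coarse excisiveness to produce an isomorphism $\Phi : C^*_X(W)/C^*_X(\partial W) \xrightarrow{\cong} C^*(X)/C^*_X(W')$. For any finite propagation $T \in C^*(X)$, the decomposition
\begin{equation*}
T = \chi_W T \chi_W + \chi_W T \chi_{W'} + \chi_{W'} T \chi_W + \chi_{W'} T \chi_{W'}
\end{equation*}
has cross terms supported near $\partial W$ (any point in $W$ a distance greater than the propagation from $\partial W$ lies a fortiori that far from $W'$), its last summand lies in $C^*_X(W')$, and $\chi_W T \chi_W \in C^*_X(W)$; although $\chi_W$ is not pseudolocal, it commutes with every $\varphi \in C_0(X)$, so sandwiching by $\chi_W$ preserves pseudolocality, local compactness and the propagation bound. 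This gives surjectivity after passing to norm closures. For injectivity, an element of $C^*_X(W) \cap C^*_X(W')$ of finite propagation is handled by a partition-of-unity argument on the complement of the coarse excisiveness $S$-neighbourhood of $\partial W$, showing it lies in $C^*_X(\partial W)$.

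Finally, for finite propagation $T$ one computes $\bar{e}(\tilde{r}(T)) = [\Pi^*\Pi \, T \, \Pi^*\Pi] = [\chi_W T \chi_W]$, which $\Phi$ sends to $[T] \in C^*(X)/C^*_X(W')$, so $\Phi \circ \bar{e} \circ \tilde{r}$ is the quotient projection $\pi$. Naturality of the exponential map applied to the central diagram then gives
\begin{equation*}
e_* \circ \delta_i \circ \tilde{r}_* = \tilde{\delta}_i \circ \Phi_*^{-1} \circ \pi_*,
\end{equation*}
where $\delta_i$ is the exponential from \eqref{SixTerm1} and $\tilde{\delta}_i$ is its analogue for the bottom row. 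The right-hand side is the Mayer--Vietoris boundary map in its standard form, while the leftmost $e_*^{-1}$ is the identification $K_*(C^*_X(\partial W)) \cong K_*(C^*_W(\partial W))$ of the proposition. The main obstacle is reconciling the above with whichever specific construction of the Mayer--Vietoris sequence is adopted in \cite{EwertMeyer} and \cite{HRY}: although naturality guarantees that any two reasonable constructions produce the same map, one must verify that no sign or orientation discrepancy creeps in when comparing to their concrete diagrams.
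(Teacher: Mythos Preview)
Your argument is correct and follows essentially the same route as the paper. The paper's proof is much terser: it directly invokes \cite[Prop.~6 \& Eq.~(8)]{EwertMeyer} for the description of the Mayer--Vietoris boundary map as the composite $K_i(C^*(X)) \to K_i(C^*(X)/C^*_X(W')) \xleftarrow{\cong} K_i(C^*_X(W)/C^*_X(\partial W)) \xrightarrow{\delta_i} K_{i-1}(C^*_X(\partial W))$, then draws the commutative diagram linking this to $\tilde{r}_*$ and $e_*$ and declares it ``easily checked''. You have unpacked precisely what that check consists of: constructing the excision isomorphism $\Phi$ explicitly via the $\chi_W$-decomposition, and verifying $\Phi \circ \bar{e} \circ \tilde{r} = \pi$ on finite-propagation operators. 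Your honest caveat about matching conventions with \cite{EwertMeyer} is exactly what the paper sidesteps by citing their Prop.~6 and Eq.~(8) verbatim, so no genuine gap remains.
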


\begin{proof}
By \cite[Prop.~6 \& Eq.~(8)]{EwertMeyer}, the boundary map $\partial_i$ in the Mayer--Vietoris sequence is given as the composition of the left vertical column in the diagram below.
\begin{equation*}
\begin{tikzcd}
K_i(C^*(X)) \ar[d] \ar[ddd, bend right=90, "\partial_i"'] \ar[ddr, "\tilde{r}_*", bend left=30]\\
K_i(C^*(X)/C^*_X(W^\prime)) \\
K_i(C^*_X(W)/C^*_X(\partial W)) \ar[u, "\cong"] \ar[d, "\delta_i"] & \ar[l, "e_*"', "\cong"] K_i(C^*(W)/C^*_W(\partial W)) \ar[d, "\delta_i"]\\
 K_{i-1}(C^*_X(\partial W)) & \ar[l, "e_*", "\cong"'] K_{i-1}(C^*_W(\partial W))
\end{tikzcd}
\end{equation*}
The diagram is easily checked to be commutative and the desired equality follows from the commutativity of the outmost square of the diagram.
\end{proof}

\subsection{Relation to quasi-equivariant exact sequence}
For this subsection, suppose that the Riemannian manifold $X$ carries an effective, cocompact, properly discontinuous, isometric action of a discrete countable group $\Gamma$. In this case, one can consider the \emph{equivariant} Roe algebra $C^*(X, \Gamma) \subseteq C^*(X)$, which is the norm closure of all locally compact, finite propagation, $\Gamma$-invariant operators in $\mathcal{B}(L^2(X))$. 
Assume that $W \subset X$ is a closed subset such that the distance function to $\partial W$ is unbounded (see Fig.\ \ref{fig:partitions}).
In \cite{LTcobordism}, it is shown that there is a short exact sequence (compare Eq.\ \eqref{ShortExactW})
\begin{equation} \label{ShortExactQuasi}
  0 \longrightarrow C^*_W(\partial W) \longrightarrow Q^*(W, \Gamma) \longrightarrow C^*(X, \Gamma) \longrightarrow 0,
\end{equation}
where $Q^*(W, \Gamma)$ is a certain algebra of \emph{quasi-invariant} operators. Concretely, if $U_\gamma$ are the unitary operators on $L^2(X)$ representing the elements $\gamma \in \Gamma$ and $V_\gamma := \Pi U_\gamma \Pi^*$ their compression on $L^2(W)$, the algebra $Q^*(W, \Gamma)$ is the set of those operators $T \in C^*(W)$ such that $V_\gamma T - T V_\gamma \in C^*_W(\partial W)$ for all $\gamma \in \Gamma$. 

\begin{proposition}
The boundary map of the $K$-theory six-term sequence associated to \eqref{ShortExactQuasi} is just the boundary map \eqref{BoundaryMap1}, precomposed with the canonical map $K_i(C^*(X, \Gamma)) \rightarrow K_i(C^*(X))$ that forgets equivariance.
\end{proposition}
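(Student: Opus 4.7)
The plan is to exhibit a morphism of short exact sequences from the quasi-equivariant sequence \eqref{ShortExactQuasi} to the localization sequence \eqref{ShortExactW}, and then invoke naturality of the $K$-theoretic boundary map.

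Explicitly, I would take the identity on $C^*_W(\partial W)$ on the left, the inclusion $Q^*(W, \Gamma) \hookrightarrow C^*(W)$ in the middle, and on the right the composition
$$C^*(X, \Gamma) \xrightarrow{\iota} C^*(X) \xrightarrow{\tilde{r}} C^*(W)/C^*_W(\partial W),$$
where $\iota$ is the inclusion forgetting $\Gamma$-equivariance. The left square commutes tautologically. For the right square to commute, one needs that whenever $T \in Q^*(W, \Gamma)$ maps to $S \in C^*(X, \Gamma)$ under the quotient in \eqref{ShortExactQuasi}, one has $T \equiv \Pi S \Pi^* \pmod{C^*_W(\partial W)}$, i.e., $T - r(S) \in C^*_W(\partial W)$. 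This is exactly the defining property of the quasi-equivariant sequence constructed in \cite{LTcobordism}: the quotient map $Q^*(W, \Gamma) \to C^*(X, \Gamma)$ is built so that it lifts each quasi-invariant operator $T$ to a genuinely $\Gamma$-equivariant operator $S$ on $L^2(X)$ whose compression $\Pi S \Pi^*$ recovers $T$ modulo the boundary ideal $C^*_W(\partial W)$.

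With this morphism of short exact sequences in place, naturality of the six-term sequence in $K$-theory yields a commutative square whose horizontal arrows are the two boundary maps, left vertical arrow is the identity, and right vertical arrow is $\tilde{r}_* \circ \iota_*$. Since $\partial_i = \delta_i \circ \tilde{r}_*$ by definition \eqref{BoundaryMap1}, this immediately gives that the boundary map of \eqref{ShortExactQuasi} equals $\partial_i \circ \iota_*$, as claimed.

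The main obstacle is verifying the right-square commutativity, which rests on the detailed construction of $Q^*(W, \Gamma)$ and its quotient map from \cite{LTcobordism}; I would cite that construction directly rather than reprove it. Once this compatibility between the explicit lift $T \mapsto S$ and the restriction $r$ is in hand, the rest of the argument is purely formal categorical naturality.
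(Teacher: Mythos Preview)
Your proposal is correct and follows essentially the same approach as the paper: both construct the morphism of short exact sequences with identity on the ideal, inclusion $Q^*(W,\Gamma)\hookrightarrow C^*(W)$ in the middle, and $\tilde{r}\circ\iota$ on the quotient, then invoke naturality of the $K$-theory boundary map. The paper's proof is more terse about the commutativity of the right square, simply asserting it, while you correctly identify that it hinges on the construction of the quotient map in \cite{LTcobordism}.
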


\begin{proof}
We have the following diagram of short exact sequences, where the right vertical map is the restriction map from Lemma~\ref{LemmaRestrictionMap} (restricted to the equivariant subalgebra) and the middle vertical map is just the inclusion.
\begin{equation*}
\begin{tikzcd}
  0 \ar[r] & C^*_W(\partial W) \ar[d, equal] \ar[r] & Q^*(W, \Gamma) \ar[d] \ar[r] & C^*(X, \Gamma) \ar[d, "\tilde{r}"] \ar[r] & 0 \\
  0 \ar[r] & C^*_W(\partial W) \ar[r] & C^*(W) \ar[r] & C^*(W)/ C^*_W(\partial W) \ar[r] & 0
\end{tikzcd}
\end{equation*}
The proposition follows from the naturality of the $K$-theory boundary maps.
\end{proof}

\begin{remark}
The condition that  \emph{the function $x \mapsto \mathrm{dist}(x, \partial W)$ is unbounded on $W$} is needed to construct the right map in the short exact sequence \eqref{ShortExactQuasi}. Notice that if this condition is violated, the short exact sequence \eqref{ShortExactW} is trivial, because $C^*(W)/C^*_W(\partial W) = 0$ if all of $W$ is within finite distance of $\partial W$. Consequently, the Mayer--Vietoris sequence \eqref{MayerVietoris} becomes trivial in the sense that the boundary maps vanish, as follows from the fact that then $i : C^*_X(\partial W) \rightarrow C^*_X(W)$ is an isomorphism. 
\end{remark}

\subsection{Spectral Gap Filling} \label{SectionGapFilling}

Let $X$ be a complete Riemannian manifold and let $H$ be a Hamiltonian. We take $H$ to be of Laplace type, meaning that $H - \Delta$ is an operator of order at most one, where $\Delta$ is the Laplace--Beltrami operator of $X$. We assume that $H$ is symmetric, non-negative, and essentially self-adjoint on the domain $C_c^\infty(X) \subset L^2(X)$, so that there exists a unique self-adjoint extension (still non-negative), which we again denote by $H$. One then has the following lemma \cite[Prop.~3.6]{Roe-coarse-book}.

\begin{lemma} \label{PropFunctionsInCStarX}
 For any $\varphi \in C_0(\RR)$, the operator $\varphi(H)$ defined by functional calculus is contained in the Roe algebra $C^*(X)$. 
\end{lemma}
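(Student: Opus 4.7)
The plan is to prove the lemma by the standard ``wave operator'' technique: approximate arbitrary $\varphi \in C_0(\RR)$ by functions whose Fourier transforms have compact support, represent $\varphi(H)$ as a Fourier integral of the unitary wave group $e^{itH}$, and then verify separately the two defining properties of elements of $C^*(X)$, namely finite propagation and local compactness.

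First I would reduce to a dense subset. Let $\mathcal{D} \subseteq C_0(\RR)$ consist of those Schwartz functions $\varphi$ whose Fourier transform $\hat\varphi$ is smooth and compactly supported. Then $\mathcal{D}$ is dense in $C_0(\RR)$ in the uniform norm, and by the spectral theorem $\|\varphi(H)\| \leq \|\varphi\|_\infty$, so it suffices to show $\varphi(H) \in C^*(X)$ for every $\varphi \in \mathcal{D}$, because $C^*(X)$ is norm-closed in $\mathcal{B}(L^2(X))$. For such $\varphi$, Fourier inversion gives
\begin{equation*}
\varphi(H) \;=\; \frac{1}{\sqrt{2\pi}}\int_{-R}^{R} \hat\varphi(t)\, e^{itH}\, dt,
\end{equation*}
where $R>0$ is chosen so that $\mathrm{supp}\,\hat\varphi \subset [-R,R]$; the integral converges in the operator norm because the integrand is uniformly bounded and continuous in $t$.

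Next I would establish finite propagation. Because $H$ is essentially self-adjoint on $C_c^\infty(X)$, the unitary group $e^{itH}$ is well-defined on $L^2(X)$. Since $H - \Delta$ has order at most one, $H$ has the same principal symbol as $\Delta$, so the wave equation $(i\partial_t - \sqrt{H})u = 0$ (or equivalently the second-order equation with principal symbol of $\Delta$) satisfies unit-speed finite propagation: for any $\varphi \in C_c(X)$ with $\mathrm{supp}(\varphi)$ at distance $\geq |t|$ from some set $A$, one has $\varphi \cdot e^{itH}\mathbf{1}_A = 0$. This is a classical consequence of energy estimates on complete Riemannian manifolds (cf.\ \cite{HRcoarse}). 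Combined with the Fourier representation above, this yields that $\varphi(H)$ has propagation at most $R$.

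For local compactness, given $\chi \in C_c(X)$ I would show $\chi\, \varphi(H)$ and $\varphi(H)\chi$ are compact. Factor
\begin{equation*}
\chi\, \varphi(H) \;=\; \bigl[\chi\, (1+H)^{-k}\bigr]\cdot\bigl[(1+H)^k\,\varphi(H)\bigr].
\end{equation*}
The second factor equals $\psi(H)$ with $\psi(\lambda)=(1+\lambda)^k\varphi(\lambda)$, which is bounded because $\varphi$ is Schwartz, so $\psi(H)$ is a bounded operator by the spectral theorem. The first factor is compact for $k$ sufficiently large: elliptic regularity for the Laplace-type operator $H$ identifies the range of $(1+H)^{-k}$ with a local Sobolev space of order $2k$, and multiplication by $\chi$ takes this into the $H^{2k}$-Sobolev space of functions supported in the compact set $\mathrm{supp}(\chi)$, which embeds compactly into $L^2$ by the Rellich--Kondrachov theorem. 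Taking adjoints handles $\varphi(H)\chi$. Hence $\varphi(H)$ is locally compact, and combined with the propagation bound, $\varphi(H)\in C^*(X)$.

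The main obstacle is the finite propagation claim for $e^{itH}$, since this requires invoking (or reproving) the classical unit-speed propagation theorem for symmetric Laplace-type operators on a complete Riemannian manifold, together with the fact that $C_c^\infty(X)$ being a core for $H$ is enough to make the energy estimate rigorous on all of $L^2(X)$. Everything else is essentially bookkeeping via Fourier representation, Schwartz class estimates, and Rellich compactness.
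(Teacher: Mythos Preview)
Your argument contains a genuine error in the finite propagation step. You represent $\varphi(H)$ as a Fourier integral of the Schr\"odinger group $e^{itH}$ and then claim that $e^{itH}$ has unit-speed finite propagation. This is false: the Schr\"odinger propagator for a Laplace-type operator has \emph{infinite} propagation speed, already for the ordinary Laplacian on $\RR^n$. The equation you mention, $(i\partial_t-\sqrt{H})u=0$, is solved by $e^{it\sqrt{H}}$, not by $e^{itH}$; only the former (equivalently, the wave operators $\cos(t\sqrt{H})$) enjoys finite propagation. Since your Fourier formula involves $e^{itH}$ rather than $e^{it\sqrt{H}}$, the propagation bound you assert does not follow, and the argument breaks down.

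The paper's proof fixes exactly this point: one writes $\varphi(H)=\psi(\sqrt{H})$ with $\psi(s)=\varphi(s^2)$ even, and uses the cosine transform
\[
\psi(\sqrt{H})=\frac{1}{\pi}\int_0^\infty \hat\psi(t)\,\cos\bigl(t\sqrt{H}\bigr)\,dt,
\]
so that the integrand is the genuine wave operator $\cos(t\sqrt{H})$, which does have propagation at most $|t|$. With this correction your density reduction and your local-compactness argument via the factorisation $\chi\,(1+H)^{-k}\cdot (1+H)^k\varphi(H)$ and Rellich are fine (and essentially what the paper cites as ``ellipticity''). The only change you need is to replace the Schr\"odinger group by the wave group and to verify that functions $\varphi$ for which $\psi(s)=\varphi(s^2)$ has compactly supported Fourier transform are dense in $C_0(\RR)$.
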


In particular, if $S \subset \mathrm{spec}(H)$ is a compact part of the spectrum that is separated from the rest of the spectrum by spectral gaps, then the spectral projection $P_S$ is contained in $C^*(X)$, as there exists a continuous function $\varphi$ that is equal to one on $S$ and zero on $\mathrm{spec}(H) \setminus S$, and $P_S = \varphi(H)$ for such a function (see Fig.\ \ref{fig:spectral.projection}). We obtain an associated class
\begin{equation}
  [P_S] \in K_0(C^*(X))
\end{equation}
in the $K$-theory of the (bulk) Roe algebra.

\medskip

Let now $W \subseteq X$ be a regular closed subset with interior $U = \mathring{W}$. We can then consider the operator $H$ on the space $C^\infty_c(U)$ of smooth functions with compact support not touching the boundary; this gives a symmetric, densely defined, unbounded operator on $L^2(W)$. 
Let $H_W$ be the self-adjoint extension obtained by taking either Dirichlet boundary conditions or, in case that $W$ is sufficiently regular to define the normal derivative almost everywhere at the boundary, Neumann boundary conditions. The following result can be found in \cite{LTcobordism}, Thm.~3.2.

\begin{lemma} \label{LemmaDifference}
For each $\varphi \in C_0(\RR)$, we have $\varphi(H_W) \in C^*(W)$. Moreover,  the difference $\varphi(H_W) - r(\varphi(H))$ is contained in $C^*_W(\partial W)$.
\end{lemma}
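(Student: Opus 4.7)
The plan is to deduce both claims from finite propagation speed of the wave operators $\cos(t\sqrt{H})$ and $\cos(t\sqrt{H_W})$, after reducing to the dense subclass of test functions whose functional calculus yields finite-propagation operators. Given $\varphi \in C_0(\RR)$, only its restriction to $[0,\infty)$ matters (as $H, H_W \geq 0$); writing $\varphi(\lambda) = \psi(\sqrt{\lambda})$ for an even $\psi \in C_0(\RR)$, we can uniformly approximate $\psi$ by even Schwartz functions $\psi_n$ with $\hat\psi_n$ compactly supported in $[-T_n, T_n]$ (for instance $\psi_n = \psi * \eta_n$ where $\hat\eta_n$ is a smooth bump converging pointwise to $1$). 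Since the continuous functional calculus is norm continuous and the Roe algebras are norm closed, it suffices to verify both claims for such $\psi_n$.

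For the first claim, Fourier inversion gives
\begin{equation*}
\varphi(H_W) \;=\; \psi_n(\sqrt{H_W}) \;=\; \frac{1}{\sqrt{2\pi}}\int_{-T_n}^{T_n}\hat\psi_n(t)\,\cos(t\sqrt{H_W})\,dt.
\end{equation*}
Finite propagation speed of $\cos(t\sqrt{H_W})$, which holds for the Dirichlet (and, under the regularity hypothesis, Neumann) self-adjoint extension by the standard energy estimate, then implies that $\varphi(H_W)$ has propagation at most $T_n$. Local compactness of $\chi\,\varphi(H_W)$ for $\chi \in C_0(W)$ follows from elliptic regularity of $H_W$, exactly as in the proof of Lemma~\ref{PropFunctionsInCStarX}.

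For the second claim, pick $\chi \in C_c(W)$ with $\mathrm{dist}(\mathrm{supp}(\chi), \partial W) > T_n$, and set $u(t) := \cos(t\sqrt{H})\Pi^*\chi$ on $X$ and $v(t) := \cos(t\sqrt{H_W})\chi$ on $W$. For $|t| \leq T_n$, finite propagation ensures that both $u(t)$ and $v(t)$ are supported in the $T_n$-neighborhood of $\mathrm{supp}(\chi)$, which by assumption lies inside the interior $U = \mathring{W}$. Since $H$ and $H_W$ agree as differential operators on $U$, both $u$ and $v$ solve the same homogeneous wave equation with identical Cauchy data; the domain-of-dependence principle then forces $\Pi u(t) = v(t)$ for $|t|\leq T_n$. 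Integrating against $\hat\psi_n(t)/\sqrt{2\pi}$ yields $r(\varphi(H))\chi = \varphi(H_W)\chi$, and applying the same argument to $\bar\varphi$ and taking adjoints gives the symmetric identity $\chi\, r(\varphi(H)) = \chi\, \varphi(H_W)$. Hence $\varphi(H_W) - r(\varphi(H))$ is supported within distance $T_n$ of $\partial W$, and therefore belongs to $C^*_W(\partial W)$.

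The main obstacle is the rigorous verification of finite propagation speed for $\cos(t\sqrt{H_W})$: the classical energy estimate requires integrating by parts against a Lipschitz cutoff, which is routine under Dirichlet conditions but, for Neumann conditions, requires enough regularity of $\partial W$ to control the boundary flux term — this is precisely where the regularity assumption on $W$ in the lemma enters. Once finite propagation is granted, the remaining steps are formal: Fourier inversion, uniform approximation, and a domain-of-dependence argument applied on $|t| \leq T_n$.
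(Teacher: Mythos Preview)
Your proof is correct and follows essentially the same route as the paper's: reduce via Fourier inversion to even test functions with compactly supported transform, use finite propagation speed of $\cos(t\sqrt{H})$ and $\cos(t\sqrt{H_W})$ together with ellipticity to get membership in the Roe algebra, and then compare the two wave evolutions by the domain-of-dependence principle to see that they agree away from the boundary. The only cosmetic difference is that the paper phrases the comparison in terms of a single initial datum $u \in C_c^\infty(U)$ solving the abstract wave equation, whereas you phrase it in terms of the multiplication operator $\chi$; the content is identical.
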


\begin{proof}
We repeat the proof of Lemmas~\ref{PropFunctionsInCStarX} and \ref{LemmaDifference} for convenience of the reader. It relies on the cosine transform identity
\begin{equation} \label{FourierTransformFormula}
  \psi(D) = \frac{1}{\pi} \int_0^\infty \hat{\psi}(t) \cos(t D) \mathrm{d} t,
\end{equation}
for even functions $\psi$, which holds abstractly for any self-adjoint operator $D$ on a Hilbert space $\mathcal{H}$. Now when taking $D = \sqrt{H}$ or $\sqrt{H_W}$, it is well-known (see, e.g., \cite[Chapter~6,~Prop.~1.3]{Taylor} for the case with boundary) that the operators $\cos(t D)$ have propagation speed at most $|t|$ for any $t \in \RR$. Hence if given $\varphi \in C_0(\RR)$, one sets $\psi(t) = \varphi(t^2)$ and assumes in addition that $\hat{\psi}$ is compactly supported in $[-R, R]$, it follows directly from \eqref{FourierTransformFormula} that $\varphi(H)$ has propagation speed at most $R$. Because functions satisfying this support assumption are dense in $C_0(\RR)$, for general $\varphi$, the operator $\varphi(H)$ is a norm limit of finite propagation operators. Since local compactness is a consequence of ellipticity  (as shown in \cite[Prop.~10.5.1]{HigsonRoeBook}), this shows that $\varphi(H)$ is in fact contained in the Roe algebra; this argument works for both $H$ and $H_W$.

That indeed $\varphi(H_W) - r(\varphi(H)) \in C^*_W(\partial W)$ can be seen as follows. To begin with, it is again an abstract observation for any self-adjoint operator $D$ on a Hilbert space $\mathcal{H}$ that for any $u \in \mathcal{H}$, the $\mathcal{H}$-valued function $u_t = \cos(tD)u$ satisfies the \emph{abstract wave equation} with initial conditions
\begin{equation} \label{WaveEquation}
  \ddot{u}_t + D^2 u_t = 0, \qquad u_0 = u, \quad \dot{u}_0 = 0.
\end{equation}
For $u \in C^\infty_c(U) \subset C^\infty_c(X)$, denote by $u_t^{(X)} \in L^2(X)$ and $u_t^{(W)} u \in L^2(W)$ the two solutions obtained from setting $D=\sqrt{H}$, respectively $D=\sqrt{H_W}$. Since the corresponding wave operators have finite propagation speed, both $u_t^{(X)}$ and $u_t^{(W)}$ are contained in the $|t|$-ball around $\mathrm{supp}(u))$. Now because $H$ and $H_W$ coincide on smooth, compactly supported functions and since solutions to the wave equation are unique, it follows that $u_t^{(X)}|_W = u_t^{(W)}$ whenever this ball is contained in $U$. With a view in \eqref{FourierTransformFormula}, this implies again that if $\psi(t) = \varphi(t^2)$ has compactly supported Fourier transform, then $\varphi(H_W) - r(\varphi(H))$ is supported near the boundary; in general, the difference $\varphi(H_W) - r(\varphi(H))$ is a norm limit of operators supported near the boundary, hence contained in the localized Roe algebra $C^*_W(\partial W)$.
\end{proof}

\begin{remark} \label{RemarkGeneralBC}
It is clear from the proof that it is not important to restrict to Dirichlet or Neumann boundary conditions in Lemma~\ref{LemmaDifference}. Going through the proof, it is easy to see that the result is true for any self-adjoint extension $H_W$ such that (1) $H_W$ is still non-negative; (2) the wave operators $\cos(t \sqrt{H_W})$ have finite propagation speed; (3) the operator is \emph{elliptic} in the sense that the inclusion operator of $\mathrm{dom}(H_W)$ (as a Banach space with the graph norm) into $L^2(W)$ is compact.
\end{remark}

\begin{figure}[h]
\begin{center}
\begin{tikzpicture}
       \begin{axis}[axis lines=none,xscale=1.1,yscale=0.6]
         \addplot[samples=20, smooth, domain=0.5:1]
           plot (\x, {0.5*(1+ sin(deg(2*pi*(\x-0.75))))});
           \addplot[samples=20, smooth, domain=2:2.5]
           plot (\x, {0.5*(1- sin(deg(2*pi*(\x-2.25))))});
            \addplot[samples=3, smooth, domain=-0.5:0.5]
           plot (\x, {0});
            \addplot[samples=3, smooth, domain=1:2]
           plot (\x, {1});
           \addplot[samples=3, smooth, domain=2.5:3.5]
           plot (\x, {0});
      
       \end{axis}
            \draw[line width=6pt] (0.5,-0.3) -- (1.5,-0.3);
             \draw[line width=6pt] (3,-0.3) -- (3.5,-0.3);
             \draw[line width=6pt] (3.7,-0.3) -- (4.5,-0.3);
              \draw[|-|] (3,-1) -- (3.5,-1);
              \node at (3.8,-1.6) {$S\subset [b,c]$};
              \draw[|-|] (3.7,-1) -- (4.5,-1);
              \draw[line width=6pt] (6,-0.3) -- (7,-0.3);
              \node at (2.2,1.5) {$\varphi$}; 
               \node[below] at (1.5,-0.4) {$a$}; 
                \node[below] at (3,-0.3) {$b$}; 
                 \node[below] at (4.5,-0.4) {$c$}; 
                  \node[below] at (6,-0.3) {$d$}; 
                \node at (3,3.8) {};
                \draw[line width=1pt] (3.7,-0.3) -- (6,-0.3);
           \end{tikzpicture}
  \end{center}
\caption{Thick horizontal lines indicate the spectrum of $H$ as a subset of $\RR$. For $S$ a compact separated part of spec($H$), a function $\varphi\in C_0(\RR)$ which equals $1$ on $S$ and 0 on the rest of spec($H$), is plotted. For such a $\varphi$, the spectral projection $P_S$ for $S$ equals $\varphi(H)$. The half-space operator $H_W$ may acquire spectra in the gaps of spec($H$). If the coarse index of $P_S$ is non-trivial, then $\varphi(H_W)$ is never a projection, whatever $\varphi$ we choose. Thus $\mathrm{spec}(H_W)$ must completely fill up either $(a,b)$ or $(c,d)$, as indicated by the thin horizontal line bridging a gap of spec($H$). }\label{fig:spectral.projection}
\end{figure}
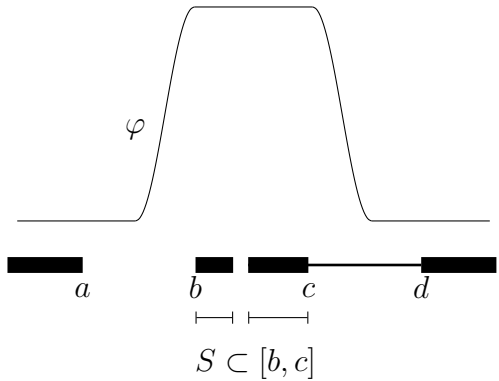

Let now $S \subset \mathrm{spec}(H)$ be a compact part of the spectrum, separated from the rest of the spectrum by gaps. This leads to a spectral projection $P_S \in C^*(X)$, as discussed above. Explicitly, let $S \subset [b, c]=[{\rm inf}(S),{\rm sup}(S)]$, so that $(a, b)$ and $(c, d)$ are open subsets of the resolvent set of $H$, for $a < b \leq c < d$. We then have the following result.

\begin{theorem}[Gap-Filling]\label{thm:gap.filling}
If $\partial_0([P_S]) \neq 0$ in $K_1(C^*_W(\partial W))$, then one of the spectral gaps $(a, b)$ or $(c, d)$ of $H$ adjacent to $S$, is completely contained in $\mathrm{spec}(H_W)$.
\end{theorem}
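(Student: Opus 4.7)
The plan is to argue by contraposition: assume that neither gap $(a,b)$ nor $(c,d)$ is entirely contained in $\mathrm{spec}(H_W)$, and deduce $\partial_0([P_S]) = 0$. Under this hypothesis, pick points $a' \in (a,b) \setminus \mathrm{spec}(H_W)$ and $d' \in (c,d) \setminus \mathrm{spec}(H_W)$. Since $\mathrm{spec}(H_W)$ is closed in $\RR$, I can choose $\epsilon > 0$ so small that the intervals $[a'-\epsilon, a']$ and $[d', d'+\epsilon]$ are both disjoint from $\mathrm{spec}(H_W)$ and simultaneously contained in $(a,b)$ and $(c,d)$ respectively, so that they automatically avoid $\mathrm{spec}(H)$ as well.

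Next I would take $\varphi \in C_c(\RR) \subset C_0(\RR)$ with $\mathrm{supp}(\varphi) \subset [a'-\epsilon, d'+\epsilon]$, equal to $1$ on $[a', d']$, and interpolating continuously (piecewise linearly, say) on the two transition intervals $[a'-\epsilon, a']$ and $[d', d'+\epsilon]$. By construction, the set where $\varphi \notin \{0,1\}$ is disjoint from $\mathrm{spec}(H_W)$, so $\varphi(H_W)$ is a genuine projection in $C^*(W)$ (membership in the Roe algebra follows from Lemma~\ref{PropFunctionsInCStarX} applied to $H_W$). Likewise, since $(a,b)$ and $(c,d)$ are spectral gaps of $H$, the restriction of $\varphi$ to $\mathrm{spec}(H)$ is precisely $\chi_S$, and hence $\varphi(H) = P_S$ in $C^*(X)$.

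By Lemma~\ref{LemmaDifference}, $\varphi(H_W) - r(\varphi(H)) \in C^*_W(\partial W)$, so $\tilde{r}_*([P_S]) = [\varphi(H_W) \bmod C^*_W(\partial W)]$ in $K_0(C^*(W)/C^*_W(\partial W))$. This class admits $\varphi(H_W)$ itself, a genuine projection in $C^*(W)$, as a self-adjoint lift. The exponential boundary map of \eqref{SixTerm1} therefore sends it to $[\exp(2\pi i\,\varphi(H_W))] = [1] = 0$ in $K_1(C^*_W(\partial W))$. Hence $\partial_0([P_S]) = \delta_0 \circ \tilde{r}_*([P_S]) = 0$, contradicting the hypothesis $\partial_0([P_S]) \neq 0$.

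The main difficulty is purely bookkeeping: the transition intervals of $\varphi$ must simultaneously lie inside the spectral gaps of $H$ (to make $\varphi(H) = P_S$ hold exactly) \emph{and} miss $\mathrm{spec}(H_W)$ (to make $\varphi(H_W)$ an honest projection rather than merely an approximate one). The contrapositive hypothesis is exactly what allows such a choice. All the analytic content has been pre-packaged into Lemmas~\ref{PropFunctionsInCStarX} and \ref{LemmaDifference}; what remains is a formal exponential-map computation in $K$-theory.
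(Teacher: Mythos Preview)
Your argument is correct and essentially identical to the paper's own proof: both proceed by contraposition, construct a bump function $\varphi$ whose transition regions lie in the resolvent sets of both $H$ and $H_W$, invoke Lemma~\ref{LemmaDifference} to identify $\varphi(H_W)$ as a self-adjoint lift of $\tilde{r}(P_S)$, and conclude via the exponential description of $\delta_0$ that $\partial_0([P_S])$ is trivial. The only cosmetic point is that membership $\varphi(H_W)\in C^*(W)$ is asserted in Lemma~\ref{LemmaDifference} rather than Lemma~\ref{PropFunctionsInCStarX}, which as stated concerns $H$ on $X$.
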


\begin{proof}
Suppose that neither $(a, b)$ nor $(c, d)$ are contained in $\mathrm{spec}(H_W)$. Since the spectrum is closed, there exist non-empty open subintervals $(a^\prime, b^\prime) \subset (a, b)$ and $(c^\prime, d^\prime) \subset (c, d)$ that are completely contained in the resolvent set of $H_W$. We can therefore choose a continuous function $\varphi$ that is constant equal to one on $[b^\prime, c^\prime]$ and zero on $(-\infty, a^\prime]$ and $[d^\prime, \infty)$. For such a function $\varphi$, we have $P_S = \varphi(H)$ (as discussed above), and by Lemma~\ref{LemmaDifference}, $\varphi(H_W)$ is a self-adjoint lift of the element $\tilde{r}(\varphi(H)) \in C^*(W) / C^*_W(\partial W)$. However, since $\varphi$ only takes the values $0$ and $1$ on $\mathrm{spec}(H_W)$, $\varphi(H_W)$ is a projection, and hence
\begin{equation*}
  \partial_0([P_S]) = \delta_0 (\tilde{r}_*[\varphi(H)]) = [\exp(2 \pi i \varphi(H_W))] = [1],
\end{equation*}
which is the trivial class.
Here we used that the $K$-theory boundary map $\delta_0$ has the explicit description as an exponential map: For general projections $p$ in $C^*(W)/C^*_W(\partial W)$, one has $\delta_0([p]) = \exp(2 \pi i \tilde{p})$, where $\tilde{p} \in C^*(W)$ is any self-adjoint lift of $p$.
\end{proof}

\begin{remark}\label{rem:equivariant.advantage}
If $H$ is equivariant with respect to a cocompact, and possibly projective, action of a discrete group $\Gamma$, and the distance function of $\partial W$ is unbounded, the same gap-filling result was similarly obtained (Theorem 3.4 of \cite{LTcobordism}) by analysing the exponential map for the quasi-invariant sequence Eq.\ \eqref{ShortExactQuasi}. For Landau Hamiltonians $H_\theta$ studied later on, we may pick $\Gamma=\ZZ^2$ (Euclidean case) or $\Gamma$ a surface group (hyperbolic case) acting on $X$. Then $H_\theta$ is invariant under the projective action of $\Gamma$ on $L^2(X)$ by \emph{magnetic translations} \cite{CHMM, MThyperbolic}.

Although the equivariant approach is somewhat more complicated, there are two advantages. First, non-vanishing of the equivariant coarse index of $P_S$ may be easier to verify for certain Hamiltonians, e.g., \emph{Chern insulators}, see \S5.2 of \cite{LTcobordism}. Second, in the equivariant setting, it can be shown that $\mathrm{spec}(H) \subseteq \mathrm{spec}(H_W)$ (Corollary 3.3 of \cite{LTcobordism}). In contrast, when $H$ is not $\Gamma$-invariant, Theorem \ref{thm:gap.filling} still applies, but $\mathrm{spec}(H) \subseteq \mathrm{spec}(H_W)$ is false in general (i.e., gaps may be \emph{introduced} into spec($H$)). 
\end{remark}

\section{Landau levels as kernels of twisted Dirac operators}\label{sec:Landau.Dirac}

\subsection{Magnetic Laplacians and Dirac operators}\label{sec:Lich}

Let $X$ be a contractible oriented two-dimensional Riemannian manifold and let $\omega$ be its volume form. By the Poincar\'{e} Lemma, for any given smooth real-valued function $\theta$ on $X$, we can find a 1-form $A_\theta \in \Omega^1(X)$ such that $d A_\theta = \theta \cdot \omega$. The {\em magnetic Laplacian} is then the operator
\begin{equation} \label{Eq:LandauHamiltonian}
H_\theta:=(d-iA_\theta)^*(d-iA_\theta), 
\end{equation}
acting on complex-valued functions on $X$.

From a geometric point of view, given a function $\theta$, there exists a Hermitian line bundle $\mathcal{L}_\theta$ with a connection $\nabla_\theta$ having curvature $F_\theta = -i \theta \cdot \omega$; again since $X$ is contractible, this line bundle must be globally trivial, and it is unique up to metric and connection preserving isomorphism (gauge transformations). Under a global trivialization, the connection $\nabla_\theta$ of $\mathcal{L}_\theta$ is sent to the operator $d - i A_\theta$ for some $A_\theta \in \Omega^1(X)$ with $dA_\theta = \theta \cdot \omega$, so that the magnetic Laplacian, Eq.\ \eqref{Eq:LandauHamiltonian}, is identified with the connection Laplacian $\nabla_\theta^* \nabla_\theta$ of $\mathcal{L}_\theta$ under this trivialization. We have $\mathcal{L}_{\theta_1} \otimes \mathcal{L}_{\theta_2} \cong \mathcal{L}_{\theta_1 + \theta_2}$ as Hermitean line bundles with connection.

\medskip

Let $\mathcal{S}$ be the spinor bundle over $X$. Since we are in two dimensions, its typical fiber is $\CC^2$, and decomposing with respect to the $(\pm1)$-eigenbundles of the grading operator $\sigma_3 := i c(e_1)c(e_2)$, it splits as the direct sum $\mathcal{S} = \mathcal{S}^+ \oplus \mathcal{S}^-$ of two Hermitian line bundles (the {\em positive} and {\em negative chirality spinors}). Here $e_1$, $e_2$ denotes a local orthonormal frame of the tangent bundle of $X$ and $c(v)$ denotes Clifford multiplication in $\mathcal{S}$ by the vector $v$. A standard computation (using, e.g., Prop.~3.43 of \cite{BerlineGetzlerVergne}) shows that  the curvature of $\mathcal{S}$ is given by $F^{\mathcal{S}}(e_1, e_2) =i\frac{R}{4} \sigma_3$, where $R$ is the scalar curvature. Therefore, using that in two dimensions, the curvature is the only invariant of a Hermitean line bundle with connection, we can identify $\mathcal{S}^\pm \cong \mathcal{L}_{\mp \frac{R}{4}}$. 

\medskip

We can now form the Dirac operator $\slashed{D}$,
acting on $\mathcal{S}$, and its twisted version $\slashed{D}_\theta$, acting on $\mathcal{S} \otimes \mathcal{L}_\theta$. The Lichnerowicz--Schr\"odinger--Weitzenb\"ock formula (see, e.g., Prop.~3.52 of \cite{BerlineGetzlerVergne}) then states that its square is related to the connection Laplacian by the formula
\begin{equation} \label{eqn:Lichnerowicz}
\slashed{D}_\theta^2= \nabla_{\mathcal{S}\otimes\mathcal{L}_\theta}^*\nabla_{\mathcal{S}\otimes\mathcal{L}_\theta}+\frac{R}{4}-i\theta \cdot c(e_1)c(e_2).
\end{equation}
With respect to the splitting $\mathcal{S} \otimes \mathcal{L}_\theta = (\mathcal{S}^+ \otimes \mathcal{L}_\theta) \oplus (\mathcal{S}^- \otimes \mathcal{L}_\theta)$ of the spinor bundle into its positive and negative chirality part (i.e., the eigenbundles of $\sigma_3$), the twisted Dirac operator and its square take the form
\begin{equation} \label{eqn:DiracMatrix}
\slashed{D}_\theta = \begin{pmatrix} 0 & \slashed{D}^-_\theta \\ \slashed{D}^+_\theta & 0 \end{pmatrix} \qquad \text{and} \qquad
\slashed{D}_\theta^2 = \begin{pmatrix} \slashed{D}^-_\theta \slashed{D}_\theta^+ & 0 \\ 0 & \slashed{D}^+_\theta \slashed{D}_\theta^- \end{pmatrix}.
\end{equation}
By the observations above, we have 
\begin{equation}
\mathcal{S}^\pm \otimes \mathcal{L}_\theta \cong \mathcal{L}_{\theta\mp \frac{R}{4}},
\end{equation}
hence the connection Laplacian in \eqref{eqn:Lichnerowicz} can be identified with a direct sum of Landau Hamiltonians corresponding to the parameters $\theta \mp \frac{R}{4}$. Therefore, remembering that $\sigma_3 = i c(e_1) c(e_2)$ and replacing $\theta$ by $\theta+\frac{R}{4}$, we obtain the following result.

\begin{proposition}\label{prop:Landau.Dirac}
For any $\theta\in C^\infty(X)$, the magnetic Laplacian and twisted Dirac operators are related by the formula
\begin{equation}\label{eqn:Lich}
  \slashed{D}_{\theta+\frac{R}{4}}^2 = \begin{pmatrix} H_{\theta} -\theta & 0 \\ 0 & H_{\theta+\frac{R}{2}} +\theta +\frac{R}{2} \end{pmatrix}
\end{equation}
 with respect to the splitting into positive and negative chirality spinors. 
\end{proposition}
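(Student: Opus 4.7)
The plan is to substitute $\theta + R/4$ for $\theta$ in the Lichnerowicz formula \eqref{eqn:Lichnerowicz}, decompose the resulting identity with respect to the chirality splitting $\mathcal{S}\otimes\mathcal{L}_{\theta+R/4} = (\mathcal{S}^+\otimes\mathcal{L}_{\theta+R/4}) \oplus (\mathcal{S}^-\otimes\mathcal{L}_{\theta+R/4})$, and read off the two diagonal blocks of $\slashed{D}_{\theta+R/4}^2$ one at a time, matching them against the entries of \eqref{eqn:DiracMatrix}.

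After the substitution the Lichnerowicz formula reads
\[
\slashed{D}_{\theta+\frac{R}{4}}^2 \;=\; \nabla^*_{\mathcal{S}\otimes\mathcal{L}_{\theta+R/4}} \nabla_{\mathcal{S}\otimes\mathcal{L}_{\theta+R/4}} \;+\; \frac{R}{4} \;-\; i\bigl(\theta+\tfrac{R}{4}\bigr)\,c(e_1)c(e_2).
\]
First I would handle the Clifford term using $\sigma_3 = ic(e_1)c(e_2)$: one has $-ic(e_1)c(e_2) = -\sigma_3$, which acts as $-1$ on $\mathcal{S}^+$ and $+1$ on $\mathcal{S}^-$, so this term is diagonal in the chirality splitting with entries $-(\theta+R/4)$ and $+(\theta+R/4)$ on the two blocks. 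The scalar curvature term $R/4$ is diagonal as well.

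Next I would dispatch the connection Laplacian via the identifications $\mathcal{S}^+\otimes\mathcal{L}_{\theta+R/4} \cong \mathcal{L}_\theta$ and $\mathcal{S}^-\otimes\mathcal{L}_{\theta+R/4}\cong \mathcal{L}_{\theta+R/2}$ as Hermitian line bundles with connection, which follow from $\mathcal{S}^\pm \cong \mathcal{L}_{\mp R/4}$ combined with the additive rule $\mathcal{L}_{\theta_1}\otimes\mathcal{L}_{\theta_2}\cong \mathcal{L}_{\theta_1+\theta_2}$. Under these isomorphisms and a global trivialization, the two chirality blocks of the connection Laplacian become the magnetic Laplacians $H_\theta$ and $H_{\theta + R/2}$ respectively, by the very definition \eqref{Eq:LandauHamiltonian} of $H_\theta$ as the connection Laplacian of $\mathcal{L}_\theta$. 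Adding up the three contributions block-by-block then gives the two diagonal entries
\[
H_\theta + \tfrac{R}{4} - (\theta + \tfrac{R}{4}) = H_\theta - \theta, \qquad H_{\theta+R/2} + \tfrac{R}{4} + (\theta + \tfrac{R}{4}) = H_{\theta+R/2} + \theta + \tfrac{R}{2},
\]
which are exactly the entries appearing in \eqref{eqn:Lich}.

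The one point that requires minor care is ensuring that the identifications $\mathcal{S}^\pm\otimes\mathcal{L}_{\theta+R/4}\cong \mathcal{L}_{\theta+R/4\mp R/4}$ really respect the connections and hence intertwine the corresponding connection Laplacians. This is automatic because these are isomorphisms of Hermitian line bundles \emph{with connection}, which in turn rests on the uniqueness up to gauge of such bundles on the contractible $X$ once the curvature is specified (a fact already used tacitly in \S\ref{sec:Lich}). Beyond this bookkeeping point, I do not expect any real obstacle: the whole statement is essentially a matrix expansion of the Lichnerowicz formula in the chirality basis.
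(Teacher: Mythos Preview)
Your proposal is correct and follows exactly the approach the paper takes: the proposition is stated immediately after the text that derives it, and that derivation consists precisely of writing out the Lichnerowicz formula \eqref{eqn:Lichnerowicz}, using $\sigma_3 = i c(e_1)c(e_2)$ and the identifications $\mathcal{S}^\pm \otimes \mathcal{L}_\theta \cong \mathcal{L}_{\theta \mp R/4}$, and then substituting $\theta \mapsto \theta + \tfrac{R}{4}$. Your write-up simply makes the block-by-block bookkeeping explicit, which the paper leaves to the reader.
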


\begin{remark}[Spectral supersymmetry]\label{rem:spectral.susy}
Prop.\ \ref{prop:Landau.Dirac} provides the geometric origin for the statement that $H_{\theta}-\theta$ has \emph{supersymmetric partner} $H_{\theta+\frac{R}{2}}+\theta + \frac{R}{2}$. Consequently, the two operators must share the same non-zero spectrum, see \S 5 of \cite{Thaller}, see also \cite{Moller}. Here, there is a standard way to make these operators self-adjoint when $X$ is complete. Furthermore, the latter operator has the same form as the former operator, except for a parameter shift $\theta\mapsto\theta+\frac{R}{2}$ and an extra parameter-dependent scalar $2\theta+\frac{R}{2}$. In case $\theta$ and $R$ are constant functions, such a relationship between supersymmetric partners is sometimes called the \emph{shape invariance} property, and it allows for a ``ladder operator'' method for spectral computation \cite{Gendenshtein, Benedict}. For $X$ the hyperbolic plane, a generalisation of this technique was used in \cite{Inahama2} to study the essential spectral properties of $H_\theta$ with asymptotically constant $\theta$.
\end{remark}

\subsection{Landau levels and Dirac operators for the hyperbolic plane}\label{sec:raising.lowering}
In this section, we consider the the hyperbolic plane $X = \mathbb{H}$, and constant $\theta \in \RR \setminus \{0\}$. In this case, the magnetic Laplacian $H_\theta$, Eq.\ \eqref{Eq:LandauHamiltonian}, is called the \emph{Landau Hamiltonian}. Both $H_\theta$ and the twisted Dirac operators $\slashed{D}_\theta$, are essentially self-adjoint on compactly supported smooth functions (respectively compactly supported smooth spinors), and their unique self-adjoint extensions to unbounded operators on $L^2(X)$, respectively $L^2(X, \mathcal{S})$ are again denoted by the same symbols. We will exploit the spectral supersymmetry (Remark \ref{rem:spectral.susy}) to study the spectrum of $H_\theta$, as illustrated in Fig.\ \ref{fig:hyp.raising.lowering}. While similar methods were used in \cite{Inahama2} to compute the spectrum, our presentation stresses the geometric relationship with Dirac operators so that (coarse) index theory can be applied later in \S\ref{sec:coarse.Landau}. 

\vspace{0.5em}
Since the scalar curvature on $\mathbb{H}$ is constant, $R = -2$, we obtain from Prop.~\ref{prop:Landau.Dirac} the fundamental relationships, which hold for all $\theta\in\RR$, 
\begin{align}
\slashed{D}_{\theta-\frac{1}{2}}^2 &= \begin{pmatrix}\slashed{D}_{\theta-\frac{1}{2}}^-\slashed{D}_{\theta-\frac{1}{2}}^+ & 0 \\ 0 & \slashed{D}_{\theta-\frac{1}{2}}^+\slashed{D}_{\theta-\frac{1}{2}}^-\end{pmatrix}=\begin{pmatrix} H_\theta -\theta & 0 \\ 0 & H_{\theta-1} +\theta -1 \end{pmatrix}\geq 0, \label{eqn:Dirac.squared}\\
\slashed{D}_{\theta+\frac{1}{2}}^2 &= \begin{pmatrix}\slashed{D}_{\theta+\frac{1}{2}}^-\slashed{D}_{\theta+\frac{1}{2}}^+ & 0 \\ 0 & \slashed{D}_{\theta+\frac{1}{2}}^+\slashed{D}_{\theta+\frac{1}{2}}^-\end{pmatrix}=\begin{pmatrix} H_{\theta+1} -\theta-1 & 0 \\ 0 & H_{\theta} +\theta \end{pmatrix}\geq 0,\label{eqn:Dirac.squared2}
\end{align}
where the second follows from the first upon replacing $\theta$ by $\theta-1$. Inspecting the top-left piece of Eq.\ \eqref{eqn:Dirac.squared} if $\theta\geq 0$ (resp.\ bottom-right piece of Eq.\ \eqref{eqn:Dirac.squared2} if $\theta\leq 0$), we obtain an easy lower bound $H_\theta\geq |\theta|$. 

The value $|\theta|$ is the \emph{lowest Landau level}, but it is only attained in the spectrum of $H_\theta$ when $|\theta|\geq\frac{1}{2}$. More generally, the spectrum of $H_\theta$ ($\theta \neq 0$) consists of isolated eigenvalues, called \emph{Landau levels},
\begin{equation}
\lambda_{m, \theta} =(2m+1)|\theta |-m(m+1),\quad m=0,1,\ldots m_{\rm max}< |\theta |-\frac{1}{2},\label{eqn:hyperbolic.Landau.levels2}
\end{equation}
as well as a continuous part $[\frac{1}{4} + \theta^2, \infty)$ above the Landau levels, see \cite {Comtet, Inahama2}. 
When $|\theta|>m+\frac{1}{2}$, the $m$-th Landau level $\lambda_{m,\theta}$ is isolated, and we denote by $\mathcal{E}_{m, \theta}$ its corresponding \emph{$m$-th Landau eigenspace}.

\begin{lemma}\label{lem:lowest.Landau.index}
For $|\theta| > \frac{1}{2}$, the eigenspace to the lowest Landau level $\lambda_{0,\theta}=|\theta|$ is 
\begin{equation*}
\mathcal{E}_{0,\theta} = \begin{cases} \ker(\slashed{D}^+_{\theta-\frac{1}{2}}) & \text{if}~\theta >\frac{1}{2},\\
\ker(\slashed{D}^-_{\theta+\frac{1}{2}}) & \text{if}~\theta <-\frac{1}{2}.
\end{cases}
\end{equation*}
On the other hand, if $\theta > \frac{1}{2}$, then $\ker(\slashed{D}^-_{\theta-\frac{1}{2}}) = 0$ and if $\theta < -\frac{1}{2}$, then $\ker(\slashed{D}^+_{\theta+\frac{1}{2}}) = 0$.
\end{lemma}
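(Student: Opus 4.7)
The plan is to identify the lowest Landau eigenspace with a single chirality kernel by reading off the diagonal entries of the squared Dirac identities \eqref{eqn:Dirac.squared} and \eqref{eqn:Dirac.squared2}. For $\theta > \tfrac{1}{2}$ the lowest Landau level is $\lambda_{0,\theta} = \theta$, so $\mathcal{E}_{0,\theta} = \ker(H_\theta - \theta)$. First I will use the top-left entry of \eqref{eqn:Dirac.squared}, namely $\slashed{D}^-_{\theta-\frac{1}{2}}\slashed{D}^+_{\theta-\frac{1}{2}} = H_\theta - \theta$; since the off-diagonal form \eqref{eqn:DiracMatrix} together with the self-adjointness of $\slashed{D}_{\theta-\frac{1}{2}}$ gives $\slashed{D}^- = (\slashed{D}^+)^*$, the elementary identity $\ker(A^*A) = \ker A$ will yield $\mathcal{E}_{0,\theta} = \ker \slashed{D}^+_{\theta-\frac{1}{2}}$. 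The case $\theta < -\tfrac{1}{2}$ is analogous: now $|\theta| = -\theta$ gives $\mathcal{E}_{0,\theta} = \ker(H_\theta + \theta)$, and the bottom-right entry $\slashed{D}^+_{\theta+\frac{1}{2}}\slashed{D}^-_{\theta+\frac{1}{2}} = H_\theta + \theta$ of \eqref{eqn:Dirac.squared2} identifies this kernel with $\ker \slashed{D}^-_{\theta+\frac{1}{2}}$.

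For the vanishing of the opposite chirality kernel, I will invoke the remaining diagonal entry of the same identity. When $\theta > \tfrac{1}{2}$, the bottom-right entry of \eqref{eqn:Dirac.squared} gives $\slashed{D}^+_{\theta-\frac{1}{2}}\slashed{D}^-_{\theta-\frac{1}{2}} = H_{\theta-1} + \theta - 1$, so $\ker \slashed{D}^-_{\theta-\frac{1}{2}} = \ker(H_{\theta-1} + \theta - 1)$, and triviality is equivalent to the statement that $1-\theta$ is not an $L^2$-eigenvalue of $H_{\theta-1}$. I will verify this using the explicit spectrum from the paragraph containing \eqref{eqn:hyperbolic.Landau.levels2}: the only $L^2$-eigenvalues of $H_{\theta-1}$ are its Landau levels $\lambda_{m,\theta-1}$, which exist only when $m < |\theta-1|-\tfrac{1}{2}$, while the rest of the spectrum is continuous. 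For $\theta > 1$ the number $1-\theta$ is negative while every Landau level is non-negative; for the borderline value $\theta = 1$ the operator $H_0 = \Delta$ has purely continuous spectrum $[\tfrac{1}{4},\infty)$; and for $\tfrac{1}{2} < \theta < 1$ one has $|\theta-1| < \tfrac{1}{2}$, so the index threshold is empty and $H_{\theta-1}$ admits no Landau levels at all. In each sub-case $1-\theta$ cannot be an $L^2$-eigenvalue, forcing $\ker \slashed{D}^-_{\theta-\frac{1}{2}} = 0$. The mirror statement for $\theta < -\tfrac{1}{2}$ will come from the top-left entry of \eqref{eqn:Dirac.squared2} by the same analysis applied to $H_{\theta+1}$ and the candidate eigenvalue $\theta+1$.

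The main obstacle will be the sub-range $\tfrac{1}{2} < |\theta| < 1$: there neither the shifted Landau Hamiltonian nor the Dirac operator has any isolated spectrum, so the vanishing of the opposite-chirality kernel relies critically on the Landau-level existence threshold $|\theta| > \tfrac{1}{2}$ recorded alongside \eqref{eqn:hyperbolic.Landau.levels2}. Once this threshold is properly exploited, the rest of the argument is just matching diagonal entries of \eqref{eqn:Dirac.squared}--\eqref{eqn:Dirac.squared2}.
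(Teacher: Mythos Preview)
Your proposal is correct and follows essentially the same route as the paper: read off the diagonal entries of \eqref{eqn:Dirac.squared}--\eqref{eqn:Dirac.squared2}, use $\ker(A^*A)=\ker A$ to identify $\mathcal{E}_{0,\theta}$ with the appropriate chirality kernel, and then kill the opposite chirality by analysing the spectrum of the shifted Hamiltonian $H_{\theta\mp 1}$. The only cosmetic difference is in the kernel-vanishing step: the paper shows directly that $H_{\theta-1}+\theta-1$ is \emph{strictly positive} (using the bound $H_{\theta-1}\ge \tfrac14+(\theta-1)^2$ when $\tfrac12<\theta\le 1$), whereas you argue that $1-\theta$ is not an $L^2$-eigenvalue of $H_{\theta-1}$; both reach the same conclusion from the same spectral input.
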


\begin{proof}
Let first $\theta >\frac{1}{2}$. From 
\eqref{eqn:Dirac.squared}, we have 
\begin{equation*}
 \slashed{D}^-_{\theta-\frac{1}{2}} \slashed{D}^+_{\theta - \frac{1}{2}} = H_\theta - \theta \qquad \text{and} \qquad  \slashed{D}^+_{\theta-\frac{1}{2}} \slashed{D}^-_{\theta - \frac{1}{2}} = H_{\theta-1} + \theta - 1.
\end{equation*}  
We claim that $H_{\theta-1} + \theta-1$ is strictly positive, so that ${\rm ker}(\slashed{D}_{\theta - \frac{1}{2}}^-)$ is trivial. In the case $\theta >1$, this is automatic as $H_{\theta-1}$ is positive. For $\theta\in (\frac{1}{2},1]$, $H_{\theta-1}$ has no isolated Landau levels so that $H_{\theta-1}\geq \frac{1}{4}+(\theta-1)^2$, and therefore $H_{\theta-1}+\theta-1\geq (\theta-\frac{1}{2})^2 > 0$ as claimed. It follows that
\begin{equation*}
{\rm ker}(\slashed{D}^+_{\theta-\frac{1}{2}})={\rm ker}(\slashed{D}^-_{\theta-\frac{1}{2}}\slashed{D}^+_{\theta-\frac{1}{2}})={\rm ker}(H_\theta-\theta)\equiv\mathcal{E}_{0,\theta}.
\end{equation*}
For $\theta < -\frac{1}{2}$, a similar argument shows that the top left piece of \eqref{eqn:Dirac.squared2} is strictly positive, so ${\rm ker}(\slashed{D}^+_{\theta+\frac{1}{2}})=0$, whence it follows that 
\begin{equation*}
{\rm ker}(\slashed{D}^-_{\theta+\frac{1}{2}})={\rm ker}(\slashed{D}^+_{\theta+\frac{1}{2}}\slashed{D}^-_{\theta+\frac{1}{2}})={\rm ker}(H_\theta+\theta)\equiv\mathcal{E}_{0,\theta}.
\end{equation*}
\end{proof}

\begin{lemma}\label{lem:isolated.zero}
$0$ is isolated in the spectrum of $\slashed{D}_a$ whenever $0\neq a\in\RR$.
\end{lemma}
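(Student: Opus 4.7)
The plan is to reduce the assertion to a direct inspection of $\mathrm{spec}(\slashed{D}_a^2)$ via the block decomposition in Proposition~\ref{prop:Landau.Dirac}, and then to verify that the nonzero part of this spectrum is bounded below by a strictly positive constant. Since $\slashed{D}_a$ is self-adjoint with real spectrum, this is equivalent to $0$ being isolated (or absent) in $\mathrm{spec}(\slashed{D}_a)$, because the spectral mapping theorem gives $\mathrm{spec}(\slashed{D}_a^2)=\{\lambda^2 : \lambda \in \mathrm{spec}(\slashed{D}_a)\}$.

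First I would substitute $R=-2$ and $\theta = a+\tfrac12$ into Eq.~\eqref{eqn:Lich} to obtain
\[
  \slashed{D}_a^2 = \begin{pmatrix} H_{a+1/2}-(a+\tfrac12) & 0 \\ 0 & H_{a-1/2}+(a-\tfrac12) \end{pmatrix},
\]
so that $\mathrm{spec}(\slashed{D}_a^2)$ is the union of the spectra of the two diagonal blocks. Both blocks are non-negative thanks to the bound $H_\theta \ge |\theta|$ derived just above Eq.~\eqref{eqn:hyperbolic.Landau.levels2}.

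Next I would plug in the Comtet description $\mathrm{spec}(H_\theta)=\{\lambda_{m,\theta}\}_{m\le m_{\max}} \cup [\tfrac14+\theta^2,\infty)$ from \eqref{eqn:hyperbolic.Landau.levels2}. The key algebraic cancellation is
\[
  \tfrac14 + (a \pm \tfrac12)^2 \mp (a\pm\tfrac12) = a^2,
\]
so the continuous part of each shifted block begins at $a^2>0$, bounded away from $0$. For the finitely many isolated Landau levels, I would evaluate $\lambda_{m, a\pm 1/2}\mp(a\pm\tfrac12)$ for the admissible range $m < |a\pm\tfrac12|-\tfrac12$. Splitting on the sign of $a$ and on whether $|a|\le\tfrac12$, $\tfrac12<|a|\le 1$, or $|a|>1$, one finds that exactly one $m=0$ term in one of the two blocks evaluates to $0$ (this is the lowest Landau level producing the kernel of $\slashed{D}_a$ identified in Lemma~\ref{lem:lowest.Landau.index}), while every other shifted Landau level that exists is strictly positive. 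In particular, for $|a|>1$ the next shifted Landau levels take the form $m(2|a|-m)$ and $(m+1)(2|a|-1-m)$ with $1\le m <|a|$, both increasing in $m$ on this range and minimized at $m=1$ (resp.\ $m=0$) by $2|a|-1>0$.

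The only real obstacle is the case bookkeeping around the thresholds $|a|=\tfrac12$ and $|a|=1$, where the count of isolated Landau levels of $H_{a\pm1/2}$ jumps; no new analytic input is required beyond Proposition~\ref{prop:Landau.Dirac} and Comtet's spectral formula. Once these cases are checked, the distance from $0$ to the nearest nonzero point of $\mathrm{spec}(\slashed{D}_a^2)$ is at least $a^2$ when $0<|a|\le 1$ and at least $\min(a^2, 2|a|-1)=2|a|-1$ when $|a|>1$. In every case this quantity is strictly positive, so $0$ is isolated in $\mathrm{spec}(\slashed{D}_a^2)$, and therefore in $\mathrm{spec}(\slashed{D}_a)$.
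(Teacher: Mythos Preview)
Your proposal is correct and follows essentially the same approach as the paper: both arguments block-diagonalize $\slashed{D}_a^2$ via Proposition~\ref{prop:Landau.Dirac} (with $R=-2$) and then inspect the spectra of $H_{a+1/2}-(a+\tfrac12)$ and $H_{a-1/2}+(a-\tfrac12)$ using Comtet's formula~\eqref{eqn:hyperbolic.Landau.levels2}. The paper's version is terser---for $a>0$ it simply notes that $a+\tfrac12$ is an isolated Landau level of $H_{a+1/2}$ (so $0$ is isolated in the first block) and that the second block is strictly positive (splitting only on $a>\tfrac12$ versus $0<a\le\tfrac12$)---while you carry out the same computation more quantitatively, extracting the explicit gap size $\min(a^2,\,2|a|-1)$.
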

\begin{proof}
From Eq.\ \eqref{eqn:Dirac.squared} and \eqref{eqn:Dirac.squared2}, we may reexpress
\begin{align*}
{\rm ker}(\slashed{D}_a)={\rm ker}(\slashed{D}_a^2)&={\rm ker}(\slashed{D}_a^+)\oplus {\rm ker}(\slashed{D}_a^-)\\
&={\rm ker}\bigl(H_{a+\frac{1}{2}}-(a+\tfrac{1}{2})\bigr)\oplus {\rm ker}\bigl(H_{a-\frac{1}{2}}+(a-\tfrac{1}{2})\bigr).
\end{align*}
Suppose $a>0$ (a similar argument takes care of the $a<0$ case). Inspecting the spectrum of $H_{a+\frac{1}{2}}$ (Eq.\ \eqref{eqn:hyperbolic.Landau.levels2}) shows that 0  is isolated in the spectrum of $H_{a+\frac{1}{2}}-(a+\frac{1}{2})$. For the second piece of ${\rm ker}(\slashed{D}_a)$, we have $H_{a-\frac{1}{2}}+(a-\tfrac{1}{2})$ strictly positive if $a>\tfrac{1}{2}$, while for $0<a\leq\tfrac{1}{2}$, we have $0\leq |a-\frac{1}{2}|<\tfrac{1}{2}$, and thus $H_{a-\frac{1}{2}}+(a-\frac{1}{2})\geq \left(\frac{1}{4}+(a-\frac{1}{2})^2\right)+(a-\tfrac{1}{2})=(\frac{1}{2}+(a-\tfrac{1}{2}))^2=a^2>0$ is again strictly positive. Overall, $0$ is isolated in ${\rm spec}(\slashed{D}_a)$.
\end{proof}

\begin{proposition}\label{prop:induction}
For $0\neq a\in\RR$, define the operator $V_a$ by
\begin{equation} \label{DefinitionV}
  V_a = \begin{cases} (\slashed{D}^+_a \slashed{D}_a^-)^{-1/2} \slashed{D}^+_a = \slashed{D}^+_a (\slashed{D}^-_a \slashed{D}_a^+)^{-1/2} & \text{on} ~\ker(\slashed{D}^+_a)^\perp \\
  0 &\text{on}~\ker(\slashed{D}^+_a).
  \end{cases}
\end{equation}
Then for any $1\leq m < |\theta| - \frac{1}{2}$, restriction provides unitary isomorphisms
\begin{align*}
  V_{\theta-\frac{1}{2}}&: \mathcal{E}_{m, \theta} \longrightarrow \mathcal{E}_{m-1, \theta-1} & &(\theta >m+\tfrac{1}{2})\\
  V_{\theta+\frac{1}{2}}^*&: \mathcal{E}_{m, \theta} \longrightarrow \mathcal{E}_{m-1, \theta+1} & &(\theta < -m-\tfrac{1}{2}).
\end{align*} 
\end{proposition}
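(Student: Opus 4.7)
The plan is to recognize $V_a$ as the partial isometry in the polar decomposition of $\slashed{D}^+_a$. From $\slashed{D}^+_a = V_a (\slashed{D}^-_a \slashed{D}^+_a)^{1/2}$, $V_a$ is a partial isometry with initial space $(\ker \slashed{D}^+_a)^\perp$ and final space $(\ker \slashed{D}^-_a)^\perp$, while $V_a^*$ is the partial isometry in the polar decomposition of $\slashed{D}^-_a$. I would then (a) show that $\slashed{D}^\pm_{\theta \mp 1/2}$ intertwines the Landau eigenspaces $\mathcal{E}_{m,\theta}$ and $\mathcal{E}_{m-1,\theta \mp 1}$, and (b) use Lemma~\ref{lem:lowest.Landau.index} to identify the relevant kernels and conclude that the appropriate one of $V$, $V^*$ restricts to a unitary between these eigenspaces.

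For the case $\theta > m+\tfrac12$, Eq.~\eqref{eqn:Dirac.squared} reads
\begin{equation*}
\slashed{D}^-_{\theta-\frac12}\slashed{D}^+_{\theta-\frac12} = H_\theta - \theta,\qquad
\slashed{D}^+_{\theta-\frac12}\slashed{D}^-_{\theta-\frac12} = H_{\theta-1} + \theta - 1.
\end{equation*}
The first routine step is the Landau-ladder arithmetic identity $\lambda_{m,\theta} - \theta = \lambda_{m-1,\theta-1} + \theta - 1$, with both sides equal to $2m\theta - m(m+1)$ by \eqref{eqn:hyperbolic.Landau.levels2}; this says that $\slashed{D}^+_{\theta-\frac12}$ and $\slashed{D}^-_{\theta-\frac12}$ intertwine $\mathcal{E}_{m,\theta}$ and $\mathcal{E}_{m-1,\theta-1}$, with both partner operators acting on these eigenspaces as multiplication by the common positive scalar $\mu_m := \lambda_{m,\theta}-\theta = m(2\theta - m - 1) > 0$.

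Lemma~\ref{lem:lowest.Landau.index} gives $\ker\slashed{D}^+_{\theta-\frac12} = \mathcal{E}_{0,\theta}$ and $\ker\slashed{D}^-_{\theta-\frac12} = 0$. Orthogonality of distinct Landau eigenspaces of the self-adjoint $H_\theta$ places $\mathcal{E}_{m,\theta} \subset \mathcal{E}_{0,\theta}^\perp$ for $m \geq 1$, so $V_{\theta-\frac12}$ is isometric on $\mathcal{E}_{m,\theta}$ and maps into $\mathcal{E}_{m-1,\theta-1}$ by the intertwining. Surjectivity follows because $V^*_{\theta-\frac12}$ is everywhere defined on $L^2(\mathcal{L}_{\theta-1})$ (as $\ker\slashed{D}^-_{\theta-\frac12} = 0$) and maps $\mathcal{E}_{m-1,\theta-1}$ into $\mathcal{E}_{m,\theta}$ by the symmetric intertwining; the partial-isometry identities $V^*V = \mathrm{id}$ on $\mathcal{E}_{0,\theta}^\perp$ and $VV^* = \mathrm{id}$ on $L^2(\mathcal{L}_{\theta-1})$ then provide a two-sided inverse.

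The case $\theta < -m -\tfrac12$ is symmetric, now using \eqref{eqn:Dirac.squared2}: $\slashed{D}^+_{\theta+\frac12}\slashed{D}^-_{\theta+\frac12} = H_\theta + \theta$ and $\slashed{D}^-_{\theta+\frac12}\slashed{D}^+_{\theta+\frac12} = H_{\theta+1} - \theta - 1$, together with the analogous identity $\lambda_{m,\theta} + \theta = \lambda_{m-1,\theta+1} - \theta - 1$. Lemma~\ref{lem:lowest.Landau.index} now yields $\ker\slashed{D}^-_{\theta+\frac12} = \mathcal{E}_{0,\theta}$ and $\ker\slashed{D}^+_{\theta+\frac12} = 0$, so the roles of the two chiralities are swapped and it is $V^*_{\theta+\frac12}$ rather than $V_{\theta+\frac12}$ that furnishes the desired unitary $\mathcal{E}_{m,\theta} \to \mathcal{E}_{m-1,\theta+1}$, by the same partial-isometry argument. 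I do not foresee any serious obstacle: the only quantitative input is the Landau-ladder arithmetic, and no domain issues arise since Landau eigenvectors are smooth by elliptic regularity and both $H_\theta$ and $\slashed{D}_a$ are essentially self-adjoint on compactly supported smooth (spinor) fields, as recorded in \S\ref{sec:raising.lowering}.
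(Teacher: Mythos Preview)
Your proof is correct and follows essentially the same route as the paper: both arguments rest on the Lichnerowicz identities \eqref{eqn:Dirac.squared}--\eqref{eqn:Dirac.squared2} together with the ladder arithmetic $\lambda_{m,\theta}-2\theta+1=\lambda_{m-1,\theta-1}$ to show that $V_a$ and $V_a^*$ intertwine the relevant eigenspaces and are mutually inverse there. One small remark: your appeal to Lemma~\ref{lem:lowest.Landau.index} to identify the kernels is not actually needed---the positivity $\mu_m=\lambda_{m,\theta}-\theta>0$ (for $m\geq 1$) already forces $\mathcal{E}_{m,\theta}\subset(\ker\slashed{D}^+_{\theta-\frac12})^\perp$ and likewise $\mathcal{E}_{m-1,\theta-1}\subset(\ker\slashed{D}^-_{\theta-\frac12})^\perp$, which is all the partial-isometry argument requires; the paper's proof proceeds directly in this way without invoking the lemma.
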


\begin{proof}
Suppose $\theta>m+\frac{1}{2}$, and write $a=\theta-\frac{1}{2}$. Let $\xi \in \mathcal{E}_{m, \theta}$, so that 
\begin{equation*}
\slashed{D}^-_a\slashed{D}^+_a\xi\equiv\slashed{D}^-_{\theta - \frac{1}{2}}\slashed{D}^+_{\theta - \frac{1}{2}} \xi = (H_\theta - \theta)\xi = (\lambda_{m, \theta} - \theta)\xi.
\end{equation*}
Applying $V_a$ to both sides, and using \eqref{eqn:Dirac.squared} again, we find
\begin{align*}
  (\lambda_{m, \theta} - \theta) V_{a}\xi &= \slashed{D}^+_{a}(\slashed{D}^-_{a} \slashed{D}_{a}^+)^{-1/2}\slashed{D}^-_{a}\slashed{D}^+_{a} \xi \\
  &= \slashed{D}^+_{a}\slashed{D}^-_{a}\slashed{D}^+_{a} (\slashed{D}^-_{a} \slashed{D}_{a}^+)^{-1/2}\xi \\
  &=(\slashed{D}^+_{\theta-\frac{1}{2}}\slashed{D}^-_{\theta-\frac{1}{2}})V_a \xi= (H_{\theta-1} + \theta - 1) V_{a} \xi.
\end{align*}
Rearranging, and using $\lambda_{m, \theta} - 2\theta +1 =  \lambda_{m-1, \theta-1}$, we obtain
\begin{equation*}
H_{\theta - 1}( V_{a} \xi) = \lambda_{m-1,\theta-1} (V_{a} \xi)\quad\Rightarrow\quad V_{a} \xi \in \mathcal{E}_{m-1, \theta-1}. 
\end{equation*}
In a similar way, for $\zeta\in\mathcal{E}_{m-1,\theta-1}$, we have 
\begin{align*}
(\lambda_{m-1,\theta-1}+\theta-1)V_a^*\zeta&= V_a^*(H_{\theta-1}+\theta-1)\zeta \\
&=(\slashed{D}^-_{a}\slashed{D}^+_{a})^{-1/2}\slashed{D}^-_{a}(\slashed{D}^+_{a}\slashed{D}^-_{a})\zeta\\
&=(\slashed{D}^-_{a}\slashed{D}^+_{a})(\slashed{D}^-_{a}\slashed{D}^+_{a})^{-1/2}\slashed{D}^-_{a}\zeta=(H_\theta-\theta)V_a^*\zeta,
\end{align*}
so $H_\theta(V_a^*\zeta)=(\lambda_{m-1,\theta-1}+2\theta-1)(V_a^*\zeta)=\lambda_{m,\theta}(V_a^*\zeta)$, i.e., $V_a^*\zeta\in\mathcal{E}_{m,\theta}$. It is easy to see that $V_a:\mathcal{E}_{m,\theta}\rightarrow\mathcal{E}_{m-1,\theta-1}$ has inverse map the ``raising'' operator $V_a^*$ (restricted to $\mathcal{E}_{m-1,\theta-1}$).

Now suppose $\theta<-m-\frac{1}{2}$, and $\xi \in \mathcal{E}_{m, \theta}$. This time, let $a=\theta+\frac{1}{2}$, and repeat the above argument with $V^*_{a}$ (with an eye on \eqref{eqn:Dirac.squared2}). We obtain 
\begin{equation*}
(\lambda_{m,\theta}+\theta)V^*_{a}\xi=(H_{\theta+1}-\theta-1)V^*_{a}\xi. 
\end{equation*}
Since $\lambda_{m-1,\theta+1}=\lambda_{m,\theta}+2\theta+1$, we obtain $V^*_{a}\xi\in\mathcal{E}_{m-1,\theta+1}$. Similarly, for $\zeta\in\mathcal{E}_{m-1,\theta+1}$, we have $V_a\zeta\in\mathcal{E}_{m,\theta}$. It is again easy to see that $V_a$ is the inverse to $V_a^*:\mathcal{E}_{m,\theta}\rightarrow\mathcal{E}_{m-1,\theta+1}$.
\end{proof}

\begin{remark}\label{rem:lowering.remark}
Note that via the spectral theorem, 
\begin{equation*}
{\rm sgn}(\slashed{D}_a)=\begin{pmatrix} 0 & V^*_a \\ V_a & 0\end{pmatrix}.
\end{equation*}
For positive $\theta$, the lowering and raising of Landau levels by $V_{\theta-\frac{1}{2}}$ and $V_{\theta-\frac{1}{2}}^*$ respectively, is illustrated in Fig.\ \ref{fig:hyp.raising.lowering}. 
\end{remark}

\begin{corollary}\label{cor:Landau.Dirac.kernel}
The eigenspace for any isolated Landau level of $H_\theta$ is unitarily isomorphic to the kernel of a twisted Dirac operator. Specifically,
\begin{equation*}
  \mathcal{E}_{m, \theta} \cong \begin{cases} \ker( \slashed{D}^+_{\theta - m-\frac{1}{2}}), & \text{if}~ \theta > m+\frac{1}{2}, \\ \ker( \slashed{D}^-_{\theta + m+\frac{1}{2}}), & \text{if} ~\theta <-m-\frac{1}{2}. \end{cases}
\end{equation*}
\end{corollary}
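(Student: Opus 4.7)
The plan is to chain together the unitary isomorphisms from Proposition~\ref{prop:induction} to reduce to the $m=0$ case, where Lemma~\ref{lem:lowest.Landau.index} identifies the eigenspace as the kernel of a chiral Dirac operator. This is a straightforward induction on $m$.

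Consider first $\theta > m + \tfrac{1}{2}$. Proposition~\ref{prop:induction} provides a unitary $V_{\theta - \frac{1}{2}} : \mathcal{E}_{m,\theta} \to \mathcal{E}_{m-1,\theta-1}$. The new parameters satisfy $\theta - 1 > (m-1) + \tfrac{1}{2}$, i.e.\ the hypothesis of Proposition~\ref{prop:induction} is preserved, so one may apply $V_{\theta - \frac{3}{2}} : \mathcal{E}_{m-1,\theta-1} \to \mathcal{E}_{m-2,\theta-2}$, and so on. Iterating $m$ times produces a unitary
\begin{equation*}
V_{\theta - m + \frac{1}{2}} \circ \cdots \circ V_{\theta - \frac{3}{2}} \circ V_{\theta - \frac{1}{2}} \colon \mathcal{E}_{m,\theta} \longrightarrow \mathcal{E}_{0,\theta-m}.
\end{equation*}
At the terminal stage $\theta - m > \tfrac{1}{2}$, so the first clause of Lemma~\ref{lem:lowest.Landau.index} gives $\mathcal{E}_{0,\theta-m} = \ker\bigl(\slashed{D}^+_{(\theta-m) - \frac{1}{2}}\bigr) = \ker\bigl(\slashed{D}^+_{\theta - m - \frac{1}{2}}\bigr)$, which is the desired identification.

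The case $\theta < -m - \tfrac{1}{2}$ is symmetric: iterating the unitaries $V^*_{\theta + \frac{1}{2}}, V^*_{\theta + \frac{3}{2}}, \ldots, V^*_{\theta + m - \frac{1}{2}}$ supplied by the second clause of Proposition~\ref{prop:induction} yields a unitary $\mathcal{E}_{m,\theta} \to \mathcal{E}_{0,\theta+m}$, and the second clause of Lemma~\ref{lem:lowest.Landau.index} identifies the target with $\ker\bigl(\slashed{D}^-_{\theta + m + \frac{1}{2}}\bigr)$.

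There is no genuine obstacle here; the only bookkeeping issue is to verify that the strict inequality $|\theta| > m + \tfrac{1}{2}$ propagates through each decrement $(m,\theta) \mapsto (m-1, \theta \mp 1)$, which it does trivially since both sides shift by the same amount. In particular, the intermediate Landau levels encountered remain isolated, so Proposition~\ref{prop:induction} is legitimately applicable at every step of the induction.
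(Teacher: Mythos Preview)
Your proof is correct and follows essentially the same approach as the paper: iterate Proposition~\ref{prop:induction} to descend from $\mathcal{E}_{m,\theta}$ to $\mathcal{E}_{0,\theta\mp m}$, then invoke Lemma~\ref{lem:lowest.Landau.index}. Your version is in fact slightly more explicit than the paper's, since you write out the composition of unitaries and verify that the hypothesis $|\theta|>m+\tfrac{1}{2}$ is preserved at each step, whereas the paper simply asserts the chain of isomorphisms.
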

\begin{proof}
For the $m$-th isolated Landau level, iterating Prop.\ \ref{prop:induction} gives isomorphisms lowering the Landau levels,
\begin{align*}
\mathcal{E}_{m,\theta}&\cong\mathcal{E}_{m-1,\theta-1}\cong\mathcal{E}_{m-2,\theta-2}\cong\ldots \cong\mathcal{E}_{0,\theta-m}, & (\theta&>m+\tfrac{1}{2}),\\
\mathcal{E}_{m,\theta} &\cong \mathcal{E}_{m-1,\theta+1} \cong \mathcal{E}_{m-2,\theta+2}\cong\ldots \cong \mathcal{E}_{0,\theta+m}, & (\theta&<-m-\tfrac{1}{2}).
\end{align*}
Then Lemma \ref{lem:lowest.Landau.index} identifies the last subspace as the kernel of a Dirac operator.
\end{proof}

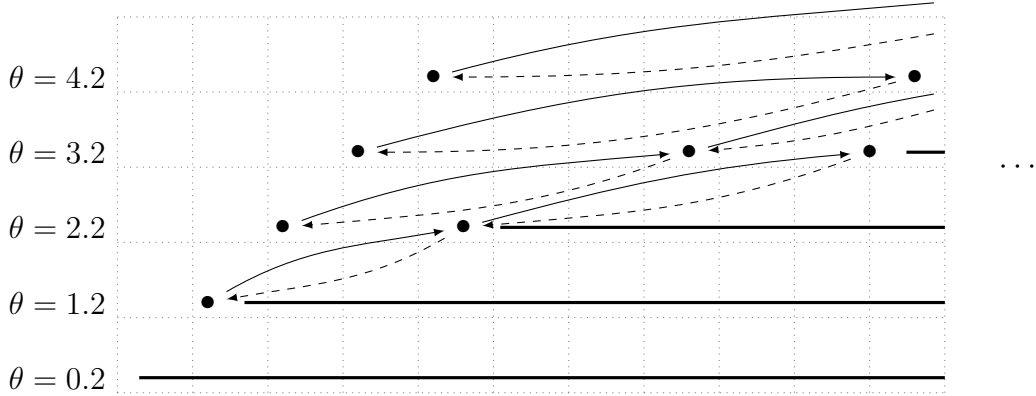
\begin{figure}
\begin{center}
\begin{tikzpicture}
\draw[step=1cm,gray, thin, dotted] (0,0) grid (11,5);
\draw[very thick] (0.29,0.2)--(11,0.2);
\draw[very thick] (1.69,1.2)--(11,1.2);
\draw[very thick] (5.09,2.2)--(11,2.2);
\draw[very thick] (10.49,3.2)--(11,3.2);
\draw (-0.8,0.2) node {$\theta=0.2$};
\draw (-0.8,1.2) node {$\theta=1.2$};
\draw (-0.8,2.2) node {$\theta=2.2$};
\draw (-0.8,3.2) node {$\theta=3.2$};
\draw (-0.8,4.2) node {$\theta=4.2$};
\node (1-0) at (1.2,1.2) {$\bullet$};
\node (2-0) at (2.2,2.2) {$\bullet$};
\node (2-1) at (4.6,2.2) {$\bullet$};
\node (3-0) at (3.2,3.2) {$\bullet$};
\node (3-1) at (7.6,3.2) {$\bullet$};
\node (3-2) at (10,3.2) {$\bullet$};
\node (4-0) at (4.2,4.2) {$\bullet$};
\node (4-1) at (10.6,4.2) {$\bullet$};
\node (5-1) at (11,4.8) {};
\node (5-1prime) at (11,5.2) {};
\node (4-2) at (11,3.8) {};
\node (4-2prime) at (11,4.0) {};
\node at (12,3) {$\cdots$};
\draw[-latex,dashed] (2-1) to[out=-150,in=10] (1-0);
\draw[-latex,dashed] (3-1) to[out=-160,in=5] (2-0);
\draw[-latex,dashed] (3-2) to[out=-160,in=5] (2-1);
\draw[-latex,dashed] (4-1) to[out=-165,in=0] (3-0);
\draw[-latex,dashed] (5-1) to[out=-170,in=0] (4-0);
\draw[-latex,dashed] (4-2) to[out=-165,in=5] (3-1);
\draw[-latex] (1-0) to[out=30,in=-170] (2-1);
\draw[-latex] (2-0) to[out=20,in=-175] (3-1);
\draw[-latex] (2-1) to[out=15,in=-175] (3-2);
\draw[-latex] (3-0) to[out=15,in=-180] (4-1);
\draw (4-0) to[out=15,in=-175] (5-1prime);
\draw (3-1) to[out=15,in=-170] (4-2prime);

\end{tikzpicture}
\end{center}
\caption{Spectrum of hyperbolic Landau Hamiltonian $H_\theta$ for some values of $\theta$. The continuous spectrum is shown as horizontal lines, while $\bullet$ labels the isolated Landau levels. The $m$-th Landau level for $H_\theta$ is lowered to the $(m-1)$-th Landau level for $H_{\theta-1}$ by $V_{\theta-\frac{1}{2}}$ (dashed arrow), except when $m=0$, i.e., a lowest Landau level. Similarly, $V_{\theta+\frac{1}{2}}^*$ raises the Landau levels (solid arrows).}\label{fig:hyp.raising.lowering}
\end{figure}

\begin{remark}
That the half-infinite interval $[\frac{1}{4}+\theta^2,\infty)$ lies in the spectrum of $H_\theta$, may be shown, e.g.\ by constructing a Weyl sequence (Lemma 4.3 of \cite{Inahama2}). In the above calculations, we also used prior knowledge of the existence of isolated Landau levels $\lambda_{m,\theta}$ lying below $\frac{1}{4}+\theta^2$, to justify their study in the first place. Interestingly, it is actually possible to show algebraically, by similar bootstrap methods, that the values $\lambda_{m,\theta}$ must be isolated if they occur in the spectrum of $H_\theta$. That they are indeed attained as spectral values, can then be deduced from the index theory arguments of \S\ref{sec:coarse.Landau}. For the Euclidean plane case, the full spectrum can be obtained algebraically, and there is ``no room'' for any continuous spectrum, see \S\ref{sec:Euclidean.Landau}. 
\end{remark}

\subsection{The case of the Euclidean plane}\label{sec:Euclidean.Landau}

In the case that $X = \mathbb{E}$, the Euclidean plane, with constant $\theta \in \RR \setminus \{0\}$, the eigenvalues of the Landau Hamiltonian are the infinite sequence
\begin{equation*}
 \lambda_{m, \theta} = (2m +1)|\theta|,\qquad m=0, 1, 2, \dots,
\end{equation*}
and no continuous spectrum occurs. Since the scalar curvature $R = 0$, we obtain from \eqref{eqn:Lich} that
\begin{equation*}
\slashed{D}_\theta^2=\begin{pmatrix}H_\theta-\theta & 0 \\ 0 & H_\theta+\theta\end{pmatrix} \geq 0.
\end{equation*}
Hence in this case, there is no shift in the $\theta$ parameter and operators with different $\theta$ parameters are unrelated. Using similar methods to the hyperbolic case, one shows that if again $\mathcal{E}_{m, \theta}$ are the eigenspaces to the eigenvalues $\lambda_{m, \theta}$, then
\begin{equation*}
  \mathcal{E}_{0, \theta} = \begin{cases} \ker \slashed{D}_\theta^+ & \text{if}~\theta>0, \\ \ker \slashed{D}_\theta^- & \text{if} ~\theta <0.\end{cases}
\end{equation*}
Moreover, the operators $V_\theta$, respectively $V_\theta^*$, defined by the same formula \eqref{DefinitionV} as in the hyperbolic case, provide unitary isomorphisms $\mathcal{E}_{m, \theta} \cong \mathcal{E}_{0, \theta}$, see Fig.\ \ref{fig:euc.raising.lowering}. This construction is well-known in the physics literature, and a rigorous account can be found in \S7.1.3 of \cite{Thaller}.

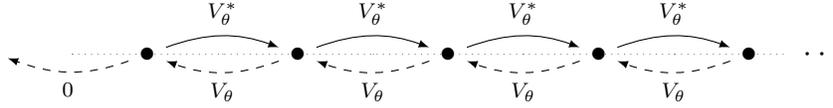
\begin{figure}
\begin{center}
\begin{tikzpicture}
\draw[step=1cm,gray, thin, dotted] (0,0) grid (9.5,0);
\node (0) at (1,0) {$\bullet$};
\node (1) at (3,0) {$\bullet$};
\node (2) at (5,0) {$\bullet$};
\node (3) at (7,0) {$\bullet$};
\node (4) at (9,0) {$\bullet$};
\node(-1) at (-1,0) {};
\node at (10,0) {$\cdots$};
\draw[-latex,dashed] (1) to[out=-160,in=-20] node[midway,font=\scriptsize,below] {$V_\theta$} (0);
\draw[-latex] (0) to[out=20,in=160] node[midway,font=\scriptsize,above] {$V_\theta^*$} (1);
\draw[-latex,dashed] (2) to[out=-160,in=-20] node[midway,font=\scriptsize,below] {$V_\theta$} (1);
\draw[-latex] (1) to[out=20,in=160] node[midway,font=\scriptsize,above] {$V_\theta^*$} (2);
\draw[-latex,dashed] (3) to[out=-160,in=-20] node[midway,font=\scriptsize,below] {$V_\theta$} (2);
\draw[-latex] (2) to[out=20,in=160] node[midway,font=\scriptsize,above] {$V_\theta^*$} (3);
\draw[-latex,dashed] (4) to[out=-160,in=-20] node[midway,font=\scriptsize,below] {$V_\theta$} (3);
\draw[-latex] (3) to[out=20,in=160] node[midway,font=\scriptsize,above] {$V_\theta^*$} (4);
\draw[-latex,dashed] (0) to[out=-160,in=-20] node[midway,font=\scriptsize,below] {$0$} (-1);
\end{tikzpicture}
\end{center}
\caption{Spectrum of Euclidean Landau Hamiltonian $H_\theta$, pictured as a subset of the horizontal dotted line. The Landau levels (labelled by $\bullet$) are isolated and evenly spaced. In the case $\theta >0$ , the operators $V_\theta$ lower the Landau levels, except at $m=0$, where $V_\theta = 0$. 
Similarly, $V_\theta^*$ raises the Landau levels (solid arrows).}\label{fig:euc.raising.lowering}
\end{figure}

\section{The coarse topological invariant of Landau levels}\label{sec:coarse.Landau}

Having identified each Landau eigenspace $\mathcal{E}_{m,\theta}$ as the kernel of a twisted Dirac operator $\slashed{D}_a$ for some suitable $a\in\RR$, it is natural to go further and identify $\mathcal{E}_{m,\theta}$ as the \emph{index} of $\slashed{D}_a$. Because $X$ is a noncompact manifold, it is necessary to use the notion of the \emph{coarse index} for $\slashed{D}_a$, which is a class in $K_0(C^*(X))$, the $K$-theory of the Roe algebra of $X$. 

\subsection{The coarse index}\label{sec:coarse.index}

If $F$ is a Fredholm operator on a Hilbert space $\mathcal{H}$, its \emph{index} is the integer $\dim \ker(F) - \dim \mathrm{coker}(F)$; the significance of this integer is that it is deformation invariant, while the individual dimensions of $\ker(F)$ and $\mathrm{coker}(F)$ are not.

In $K$-theory language, the index of $F$ can be identified with the formal difference $\mathrm{Ind}(F) := [\Pi_{{\rm ker}(F)}]-[\Pi_{{\rm ker}(F^*)}]$ of projections onto kernel and cokernel, which represents a class in $K_0(\mathcal{K}) \cong \ZZ$, the $K_0$-group of the algebra $\mathcal{K} = \mathcal{K}(\mathcal{H})$ of compact operators on a Hilbert space. This index arises naturally as the image of the boundary map in the $K$-theory six-term sequence associated to the short exact sequence
\begin{equation} \label{CompBoundSequence}
0\longrightarrow \mathcal{K}\longrightarrow \mathcal{B}\overset{\pi}{\longrightarrow}\mathcal{B}/\mathcal{K}\longrightarrow 0,
\end{equation}
with $\mathcal{B} = \mathcal{B}(\mathcal{H})$ the bounded operators. That $F$ is Fredholm means that it is invertible modulo compact operators, hence $\pi(F)$ is invertible in  $\mathcal{B}/\mathcal{K}$ and defines a class $[\pi(F)] \in K_1(\mathcal{B}/\mathcal{K})$. The index of $F$ from before is then given by $\partial[\pi(F)]={\rm Ind}(F) \in K_0(\mathcal{K})$, where $\partial:K_1(\mathcal{B}/\mathcal{K})\rightarrow K_0(\mathcal{K})$ is the connecting map of the six-term sequence in $K$-theory associated to \eqref{CompBoundSequence}.

The typical example is the Dirac operator $\slashed{D}$ on an even-dimensional compact spin manifold, which can be written as
\begin{equation*}
\slashed{D}=\begin{pmatrix} 0 & \slashed{D}^- \\ \slashed{D}^+ & 0\end{pmatrix}
\end{equation*}
with respect to the even-odd grading of the spinor bundle. Here the index of $\slashed{D}^+$ turns out to be a topological invariant, which is calculated by the celebrated Atiyah--Singer index theorem\footnote{Since $\slashed{D}$ is self-adjoint, the index of $\slashed{D}^-$ satisfies $\mathrm{Ind}(\slashed{D}^-) = - \mathrm{Ind}(\slashed{D}^+)$ and gives nothing new.}. While $\slashed{D}^+$ is unbounded, the operator $F := \slashed{D}^+ (1+\slashed{D}^-\slashed{D}^+)^{-1/2}$ is bounded and has the same Fredholm index as $\slashed{D}^+$. 

On a complete, but non-compact spin-manifold $X$, the Dirac operator is no longer Fredholm in general. To still extract an invariant, one uses the short exact sequence
\begin{equation} \label{RoeSES}
0\longrightarrow C^*(X)\longrightarrow D^*(X)\overset{\pi}{\longrightarrow}D^*(X)/C^*(X)\longrightarrow 0.
\end{equation}
of Roe algebras, recalled in \S\ref{sec:Roe.algebras}. One shows that the operator $F$, defined by the same formula as before, is invertible modulo $C^*(X)$, hence $\pi(F)$ defines a class in $K_1(D^*(X)/C^*(X))$. The \emph{coarse index of} $\slashed{D}$ as defined in \cite{Roe-book}, \S12.3 of \cite{HigsonRoeBook}, is then
\begin{equation*}
   \mathrm{Ind}(\slashed{D}) := \partial ([\pi(F)]) \in K_0(C^*(X)), 
\end{equation*}
where $\partial: K_1(D^*(X)/C^*(X)) \rightarrow K_0(C^*(X))$ is the boundary map corresponding to the short exact sequence \eqref{RoeSES}.

This index is somewhat abstract so far and does not have a clear interpretation in terms of kernel and cokernel in general. For example, for both $X = \EE$ and $\mathbb{H}$, the spectrum of $\slashed{D} = \slashed{D}_0$ is the whole real line while its kernel is zero. On the other hand, if we took the \emph{twisted} Dirac operator $\slashed{D}_a$ for some $a\neq 0$, zero is isolated in the spectrum (Lemma \ref{lem:isolated.zero}) and one shows that the coarse index is just the formal difference
\begin{equation} \label{CoarseIndexDa}
  \mathrm{Ind}(\slashed{D}_a) = [\Pi_{\ker(\slashed{D}_a^+)}] - [\Pi_{\ker(\slashed{D}_a^-)}]
\end{equation}
similar to the finite-dimensional case, where $\Pi_{\ker(\slashed{D}_a^\pm)} \in C^*(X)$ denotes the orthogonal projection onto $\ker(\slashed{D}_a^\pm)$.

\subsection{Coarse index of Landau levels}

\begin{lemma}\label{lem:coarse.K.groups}
For $X$ the Euclidean or hyperbolic plane, we have 
\begin{equation*}
K_i(X)\cong K_i(C^*(X))\cong\begin{cases} \ZZ,\qquad i=0,\\ 0\;\;\,\qquad i=1,\end{cases}
\end{equation*}
where a generator is $\mathrm{Ind}(\slashed{D})$,  the coarse index of the Dirac operator.
\end{lemma}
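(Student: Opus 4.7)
The plan is to compute $K_i(C^*(X))$ directly from the Mayer--Vietoris sequence \eqref{MayerVietoris} associated to the standard coarsely excisive decomposition $X = W \cup W'$ into two closed half-planes meeting along a geodesic $\partial W$ (the first diagram of Fig.\ \ref{fig:partitions}), and then separately relate the answer to the $K$-homology $K_i(X)$ of the space together with the index class of $\slashed{D}$.

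First I would establish that $K_i(C^*_X(W)) = K_i(C^*_X(W')) = 0$. A closed half-plane in $\EE$ or $\HH$ admits an isometry that shifts every point uniformly into the interior, and iterating this shift produces an Eilenberg swindle, which is the standard proof of ``coarse flasqueness'' and hence of the vanishing of the $K$-theory of the corresponding Roe algebra. By coarse invariance \cite{EwertMeyer}, $K_i(C^*_X(\partial W)) \cong K_i(C^*(\partial W)) \cong K_i(C^*(\RR))$, which is $\ZZ$ for $i=1$ and $0$ for $i=0$; a generator of $K_1(C^*(\RR))$ is the coarse index of the $1$-dimensional Dirac operator. Plugging these groups into \eqref{MayerVietoris}, the Mayer--Vietoris boundary maps become isomorphisms
\begin{equation*}
  \partial_0 : K_0(C^*(X)) \xrightarrow{\;\cong\;} K_1(C^*(\partial W)) \cong \ZZ, \qquad \partial_1 : K_1(C^*(X)) \xrightarrow{\;\cong\;} K_0(C^*(\partial W)) = 0,
\end{equation*}
giving the asserted groups.

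To identify the generator with $\mathrm{Ind}(\slashed{D})$, I would invoke Roe's partitioned manifold index theorem: the boundary map $\partial_0$ applied to the coarse index of the $2$-dimensional Dirac operator on $X$ returns the coarse index of the restricted $1$-dimensional Dirac operator on $\partial W \cong \RR$, which is a generator of $K_1(C^*(\RR))$ by the Connes--Higson--Roe computation on the line. Hence $\partial_0(\mathrm{Ind}(\slashed{D}))$ generates the target, so $\mathrm{Ind}(\slashed{D})$ must generate $K_0(C^*(X))$. The isomorphism $K_i(X) \cong K_i(C^*(X))$ then follows from the coarse Baum--Connes assembly map being an isomorphism for these two examples (both are CAT$(0)$ and both have finite asymptotic dimension, so either Higson--Kasparov or Yu's descent applies), together with the contractibility computation $K_0(X) \cong \ZZ$, $K_1(X) = 0$ for a contractible spin surface; the generator of $K_0(X)$ is the analytic $K$-homology class $[\slashed{D}]$, whose image under assembly is $\mathrm{Ind}(\slashed{D})$ by construction.

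The main obstacle is Step 4, i.e.\ pinning down that the Mayer--Vietoris boundary actually sends $\mathrm{Ind}(\slashed{D}_X)$ to $\mathrm{Ind}(\slashed{D}_{\partial W})$; the cleanest way is to cite the partitioned-manifold index theorem rather than to recompute the boundary map on index classes, since the latter would require unpacking both the explicit exponential description of $\delta_0$ used in Theorem \ref{thm:gap.filling} and the functional-calculus expression for the coarse index \eqref{CoarseIndexDa}.
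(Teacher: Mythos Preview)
Your approach differs genuinely from the paper's. The paper simply invokes the coarse Baum--Connes isomorphism (verified for both $\EE$ and $\HH$), reads off $K_i(X)$ from contractibility of the plane, and identifies the generator as the assembly image of the $K$-homology fundamental class $[\slashed{D}]$. You instead attempt a direct Mayer--Vietoris computation of $K_i(C^*(X))$ together with a partitioned-manifold argument for the generator, only appealing to Baum--Connes at the end for the identification with $K_i(X)$.

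The gap is your flasqueness claim for the hyperbolic half-plane. No isometry of $\HH$ maps $W$ into its interior while remaining at bounded distance from the identity: in the upper half-plane model with $\partial W=\{x=0\}$, the parabolic shift $z\mapsto z+c$ has displacement $\mathrm{arccosh}\bigl(1+c^2/(2y^2)\bigr)$, which blows up as $y\to 0$, while the hyperbolic translations $z\mapsto\lambda z$ along $\partial W$ do not push into the interior and also have unbounded displacement. The paper itself, in the proof of Lemma~\ref{lem:standard.half.plane}, takes the position that the hyperbolic half-plane is not flasque and obtains $K_i(C^*(W))=0$ only \emph{from} the present lemma by running Mayer--Vietoris in the opposite direction --- so you cannot import that vanishing without circularity. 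It may be possible to salvage flasqueness via a non-isometric coarse map (for instance the unit shift $(s,t)\mapsto(s,t+1)$ along the perpendicular geodesics in Fermi coordinates off $\partial W$, which has constant displacement $1$), but that is not the ``isometry shifting uniformly into the interior'' you claim, and you would have to supply and verify such a map.

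A smaller observation: since you invoke coarse Baum--Connes in your final paragraph anyway, the Mayer--Vietoris and partitioned-manifold steps become redundant --- the paper's short argument already delivers everything. If your intent were rather to \emph{avoid} Baum--Connes as a black box, then a corrected flasqueness argument plus the partitioned-manifold theorem would give an independent proof (and would establish coarse Baum--Connes for the plane as a byproduct), which is a genuine advantage of your route once the gap is closed.
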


\begin{proof}
The coarse Baum--Connes connjecture is verified in these cases, see Cor.\ 8.2, Prop.\ 3.8, Conjecture 6.4 of \cite{HRcoarse}. Namely, with $K_i(X)$ the Kasparov $K$-homology group, the coarse assembly map
\begin{equation*}
\mu:K_i(X)\rightarrow K_i(C^*(X)),
\end{equation*}
is an isomorphism. We remark that a possible definition for the left hand side is 
\begin{equation*}
K_i(X) := K_{1-i}(\Psi^0(X) / \Psi^{-1}(X)) \qquad (i = 0, 1),
\end{equation*}
 where $\Psi^0(X)$ is the algebra of pseudolocal operators on $X$ and $\Psi^{-1}(X)$ is the algebra of locally compact operators on $X$ (see, e.g., \S5 in \cite{HRcoarse}). It is then possible to show that $\Psi^0(X) / \Psi^{-1}(X) = D^*(X)/C^*(X)$ and the assembly map $\mu$ is just the boundary map of the six-term sequence associated to \eqref{RoeSES}.
Now note that the groups $K_i(X)$ depend only on the topology of $X$ (not its coarse geometry), and for the plane, we have $K_0(X)\cong\ZZ$ and $K_1(X)=0$.

An explicit generator for $K_0(X)\cong\ZZ$ is the class $[F]$ (where $F := \slashed{D}^+ (1+\slashed{D}^-\slashed{D}^+)^{-1/2}$) of the standard Dirac operator $-$ this is the ``fundamental class'' of $X$ in the sense of Definition 11.2.10 in \cite{HigsonRoeBook}. It is then a standard fact, \S6 of \cite{HRcoarse}, that the assembly map sends $[F]$ to the coarse index of $\slashed{D}$, i.e., $\mu([F]) = \mathrm{Ind}(\slashed{D})$, whence the claim follows.
\end{proof}

\begin{lemma}\label{lemma:Dirac.index.nontrivial}
Let $X$ be the Euclidean or hyperbolic plane, and $\slashed{D}_a$ the twisted Dirac operator {\normalfont (\S\ref{sec:Lich})}. The class $\mathrm{Ind}(\slashed{D}_a) \in K_0(C^*(X))$ is independent of $a \in \RR$.
\end{lemma}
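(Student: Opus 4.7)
The strategy is to exhibit $a \mapsto \mathrm{Ind}(\slashed{D}_a)$ as a continuous function $\RR \to K_0(C^*(X))$; since $K_0(C^*(X)) \cong \ZZ$ is discrete by Lemma~\ref{lem:coarse.K.groups}, any such function on the connected set $\RR$ must be constant.

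Because $X$ is contractible, each $\mathcal{L}_a$ admits a smooth unit-norm global section, yielding a unitary identification $L^2(X, \mathcal{S} \otimes \mathcal{L}_a) \cong L^2(X, \mathcal{S})$. Under this identification $\slashed{D}_a$ takes the form $D_a := \slashed{D} + c(-iA_a)$ on the common Hilbert space $L^2(X, \mathcal{S})$, with $A_a \in \Omega^1(X)$ any 1-form satisfying $dA_a = a \omega$; for concreteness one picks $A_a$ linear in $a$, e.g., symmetric gauge. The bounded transforms $F_a := D_a(1 + D_a^2)^{-1/2}$ lie in $D^*(X)$ and are invertible modulo $C^*(X)$, so that $\mathrm{Ind}(\slashed{D}_a) = \partial[\pi(F_a)]$ for the boundary map $\partial$ of the Roe sequence \eqref{RoeSES}. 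By homotopy invariance of $K_1$ and naturality of $\partial$, it suffices to show that $a \mapsto \pi(F_a) \in D^*(X)/C^*(X)$ is norm-continuous. For this I would use the cosine-transform identity \eqref{FourierTransformFormula} exactly as in the proof of Lemma~\ref{LemmaDifference}: for even $\psi$ with $\widehat{\psi}$ supported in $[-R,R]$, each $\psi(D_a)$ lies in $C^*(X)$ because $\cos(tD_a)$ has propagation at most $|t|$ uniformly in $a$ (all $D_a$ sharing the principal symbol of $\slashed{D}$), and the map $a \mapsto \psi(D_a)$ is norm-continuous since $D_a - D_b = -ic(A_a - A_b)$ is a pointwise-bounded bundle endomorphism and finite propagation confines the dependence of $\psi(D_a)$ on the potential to a compact region where the perturbation is uniformly bounded. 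Density of such $\psi$, together with the standard decomposition $x/\sqrt{1+x^2} = \chi(x) + \eta(x)$ with $\chi$ equal to $\mathrm{sgn}(x)$ for $|x| \geq 1$ and $\eta \in C_0(\RR)$, then yields norm continuity of $\pi(F_a)$.

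The main obstacle is precisely this continuity step, because $|A_a(x)|$ is globally unbounded on $X$ (growing linearly in symmetric gauge), so the perturbations $c(-iA_a)$ are unbounded operators on $L^2(X, \mathcal{S})$; consequently $\{D_a\}$ is neither norm-continuous as unbounded operators nor resolvent-norm-continuous. Finite propagation speed of the wave operators is what rescues the argument, localizing the functional calculus of $D_a$ to compact regions on which the perturbation is uniformly bounded --- the same mechanism already underlying Lemma~\ref{LemmaDifference}.
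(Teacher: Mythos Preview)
Your finite-propagation rescue does not do what you claim. Finite propagation of $\cos(tD_a)$ says only that for $u$ supported in a ball $B(x_0,r)$, the element $\psi(D_a)u$ depends on $A_a$ restricted to $B(x_0,r+R)$; but to bound the \emph{operator norm} $\|\psi(D_a)-\psi(D_b)\|$ you must take the supremum over all such $u$, and the center $x_0$ ranges over all of $X$. Since $|A_a-A_b|=|a-b|\,|A_1|$ is unbounded on $X$, there is no single ``compact region where the perturbation is uniformly bounded''. This is \emph{not} the mechanism of Lemma~\ref{LemmaDifference}: there the two operators being compared agree exactly on the relevant localized region, whereas here they differ by an unbounded potential on every such region. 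One might hope to repair the argument using bounded geometry and local gauges on each ball in which $A_1$ is uniformly small, but this is substantial additional work you have not carried out; and even granting it, your density step does not reach the normalizing function $\chi$, which is not in $C_0(\RR)$.

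The paper's proof avoids continuity altogether by working at the level of principal symbols. Since the perturbation $c(A_a)$ is of order zero, the principal symbol of $\slashed D_a^+$ (and hence of $F_a=\slashed D_a^+(1+\slashed D_a^-\slashed D_a^+)^{-1/2}$, which is an order-zero pseudodifferential operator) is independent of $a$. Therefore $F_a-F_b$ is a pseudodifferential operator of order $-1$, hence locally compact, and so $\pi(F_a)=\pi(F_b)$ as \emph{elements} of $\Psi^0(X)/\Psi^{-1}(X)=D^*(X)/C^*(X)$, not merely as homotopic invertibles. The equality $\mathrm{Ind}(\slashed D_a)=\mu([F_a])=\mu([F_b])=\mathrm{Ind}(\slashed D_b)$ then follows without any path argument.
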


\begin{proof}
Since $1 + \slashed{D}^-_a \slashed{D}^+_a$ is a positive differential operator of order two with scalar principal symbol, it is well-known that the operator $R_a := (1 + \slashed{D}^-_a \slashed{D}^+_a)^{-1/2}$ is a pseudodifferential operator of order $-1$, with the same principal symbol when restricted to the unit sphere (see, e.g., \cite[\S9]{ShubinBook}). It follows that the operator $F_a = \slashed{D}^+_a R_a$ is an elliptic pseudodifferential operator of order zero with principal symbol on the unit sphere equal to that of $\slashed{D}_a^+$. In particular, the principal symbol of $F_a$ is independent of $a \in \RR$ (since that of $\slashed{D}_a^+$ is independent of $a \in \RR$), and for $a, b \in \RR$, the difference $F_a - F_b$ is a pseudodifferential operator of order $-1$, hence locally compact.

Now since $F_a$ is elliptic, it is invertible modulo $\Psi^{-1}(X)$, hence defines an element $[F_a] \in K_1(\Psi^0(X) / \Psi^{-1}(X)) = K_0(X)$, and we have
\begin{equation*}
  \mathrm{Ind}(\slashed{D}_a) = \mu ( [F_a] )
\end{equation*}
by the proof of Lemma~\ref{lem:coarse.K.groups}. By the considerations before, for any $a, b \in \RR$, we have $[F_a] = [F_b]$ in $\Psi^0(X) / \Psi^{-1}(X)$, whence the claim follows.
\end{proof}

\begin{theorem}\label{thm:nontrivial.coarse.index}
  For either $X = \mathbb{E}$ or $\mathbb{H}$, let $\Pi_{m, \theta}$ be the projection onto the $m$-th Landau eigenspace $\mathcal{E}_{m, \theta}$. Then the class of $\Pi_{m, \theta}$ is a generator of $K_0(C^*(X))$.
\end{theorem}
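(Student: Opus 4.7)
The plan is to realize $[\Pi_{m,\theta}]$ as the coarse index of a twisted Dirac operator. This identification is most direct for the lowest Landau level ($m=0$), so I will first handle that case and then reduce $m \geq 1$ to it via a Murray--von Neumann equivalence built from the Landau ladder operators of Proposition~\ref{prop:induction}. I focus on the hyperbolic case with $\theta > m + \tfrac{1}{2}$; the case $\theta < -m-\tfrac{1}{2}$ and the Euclidean analog (where there is no $\theta$-shift) proceed identically.

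For the lowest-level step, set $a := \theta - m - \tfrac{1}{2} > 0$, so that $\theta' := \theta - m > \tfrac{1}{2}$. Applying Lemma~\ref{lem:lowest.Landau.index} with $\theta'$ in place of $\theta$ gives $\ker(\slashed{D}^+_a) = \mathcal{E}_{0,\theta-m}$ and $\ker(\slashed{D}^-_a) = 0$, so substitution into the formula \eqref{CoarseIndexDa} yields
\begin{equation*}
  \mathrm{Ind}(\slashed{D}_{\theta - m - \frac{1}{2}}) \;=\; [\Pi_{0,\theta - m}] \in K_0(C^*(X)).
\end{equation*}
By Lemma~\ref{lemma:Dirac.index.nontrivial} this coincides with $\mathrm{Ind}(\slashed{D})$, which generates $K_0(C^*(X)) \cong \ZZ$ by Lemma~\ref{lem:coarse.K.groups}. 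Hence $[\Pi_{0,\theta-m}]$ is a generator.

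To reach $[\Pi_{m,\theta}]$, iterate Proposition~\ref{prop:induction} to obtain a Hilbert-space unitary
\begin{equation*}
V := V_{\theta - m + \frac{1}{2}}\, V_{\theta - m + \frac{3}{2}} \cdots V_{\theta - \frac{1}{2}} : \mathcal{E}_{m,\theta} \longrightarrow \mathcal{E}_{0, \theta - m},
\end{equation*}
where the intervening spaces $L^2(X, \mathcal{L}_{\theta - k})$ are identified with $L^2(X)$ via trivializations. Setting $W := V\,\Pi_{m,\theta}$, one directly checks $W^*W = \Pi_{m,\theta}$ and $WW^* = \Pi_{0,\theta-m}$. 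Provided $W \in C^*(X)$, it witnesses a Murray--von Neumann equivalence $[\Pi_{m,\theta}] = [\Pi_{0,\theta-m}]$, completing the proof.

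The main obstacle is verifying $W \in C^*(X)$. Each factor $V_a$ is the off-diagonal component of $\mathrm{sgn}(\slashed{D}_a)$ (Remark~\ref{rem:lowering.remark}), and $\mathrm{sgn}$ is not in $C_0(\RR)$, so Lemma~\ref{PropFunctionsInCStarX} does not apply on the nose. However, since $0$ is isolated in $\mathrm{spec}(\slashed{D}_a)$ for $a \neq 0$ (Lemma~\ref{lem:isolated.zero}), the sign function agrees on the spectrum with a bounded continuous function that is well-approximated in strong resolvent sense by functions with compactly supported Fourier transform; a standard cosine-transform and finite-propagation-speed argument, analogous to that in Lemma~\ref{LemmaDifference}, then places each $V_a$ in $D^*(X)$. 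Since the endpoint projections $\Pi_{m,\theta}$ and $\Pi_{0,\theta-m}$ lie in $C^*(X)$ as spectral projections onto isolated eigenvalues (Lemma~\ref{PropFunctionsInCStarX}), and $C^*(X)$ is a closed two-sided ideal in $D^*(X)$, the compression $W$ lies in $C^*(X)$, as required. Should the direct argument for $V_a \in D^*(X)$ prove delicate, a fallback is to construct $W$ as a norm limit of finite-propagation operators already compressed by $\Pi_{0,\theta-m}$ and $\Pi_{m,\theta}$, which stay inside $C^*(X)$ by construction.
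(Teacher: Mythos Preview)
Your overall strategy is the same as the paper's: show $[\Pi_{m,\theta}]$ is Murray--von Neumann equivalent in $C^*(X)$ to $[\Pi_{0,\theta\mp m}]$ via the ladder operators of Proposition~\ref{prop:induction}, then identify the latter with $\pm\mathrm{Ind}(\slashed{D}_{\theta\mp m\mp\frac{1}{2}})$ through Lemma~\ref{lem:lowest.Landau.index}, formula~\eqref{CoarseIndexDa}, and Lemmas~\ref{lemma:Dirac.index.nontrivial}--\ref{lem:coarse.K.groups}. The only substantive difference is how you place the equivalence operator in $C^*(X)$. You argue that each $V_a$, being the off-diagonal part of $\mathrm{sgn}(\slashed{D}_a)$, lies in $D^*(X)$, and then use that $C^*(X)\trianglelefteq D^*(X)$ to conclude $W=V\Pi_{m,\theta}\in C^*(X)$. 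This is correct (and is essentially how the coarse index machinery is set up), but it is slightly roundabout. The paper avoids $D^*(X)$ entirely by a small trick: since one only needs $V_{\theta-\frac{1}{2}}$ on the single eigenspace $\mathcal{E}_{m,\theta}$, it suffices to pick a \emph{compactly supported} continuous function $\varphi$ with $\varphi(\pm\sqrt{\lambda_{m,\theta}-\theta})=\pm1$ and $\varphi\equiv 0$ on the rest of $\mathrm{spec}(\slashed{D}_{\theta-\frac{1}{2}})$, so that $v_{m,\theta}:=\varphi(\slashed{D}_{\theta-\frac{1}{2}})^+\in C^*(X)$ directly by Lemma~\ref{PropFunctionsInCStarX}, and $v_{m,\theta}^*v_{m,\theta}=\Pi_{m,\theta}$, $v_{m,\theta}v_{m,\theta}^*=\Pi_{m-1,\theta-1}$. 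Your route is valid but the paper's $C_0$-cutoff is shorter and sidesteps the hand-wavy passage about approximating $\mathrm{sgn}$ and invoking pseudolocality.
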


\begin{proof}
To begin with, observe that $\Pi_{m, \theta}$ is indeed contained in $C^*(X)$: Since the $m$-th Landau level $\lambda_{m, \theta}$ is separated from the rest of the spectrum, there exists a continuous function $\phi$ with $\phi(\lambda_{m, \theta}) = 1$ and $\phi(\lambda) = 0$ for all $\lambda \in \mathrm{spec}(H_\theta) \setminus \{\lambda_{m, \theta}\}$. Hence $\Pi_{m, \theta} = \phi(H_\theta) \in C^*(X)$, by Lemma~\ref{PropFunctionsInCStarX}.

Consider first the case that $X = \mathbb{H}$ and suppose that $\theta > m +\frac{1}{2}$. Let us rewrite the lowering operator $V_{\theta-\frac{1}{2}}$ restricted to the subspace $\mathcal{E}_{m,\theta}$, in terms of $\slashed{D}_{\theta-\frac{1}{2}}$, following Remark \ref{rem:lowering.remark}. Let $\varphi$ be a compactly supported function with
\begin{equation*}
  \varphi(\lambda) = \begin{cases} \mathrm{sign}(\lambda), & \lambda = \pm(\lambda_{m, \theta}-\theta) \\ 0, & \lambda \in \mathrm{spec}(\slashed{D}_{\theta-\frac{1}{2}}) \setminus \{\pm (\lambda_{m, \theta}-\theta)\} \end{cases}
\end{equation*}
and set $v_{m, \theta} = \varphi(\slashed{D})^+$. Then $v_{m, \theta} \in C^*(X)$ by Lemma~\ref{PropFunctionsInCStarX}, and by construction, $v_{m, \theta}$ acts on $\mathcal{E}_{m, \theta}$ just as $V_{\theta-\frac{1}{2}}$ does, in particular, $v_{m, \theta} \mathcal{E}_{m, \theta} = \mathcal{E}_{m-1, \theta-1}$.
More precisely, by Prop.~\ref{prop:induction}, we have
\begin{equation*}
  v_{m, \theta}^* v_{m, \theta} = \Pi_{m, \theta}, \qquad v_{m, \theta} v_{m, \theta}^* = \Pi_{m-1, \theta-1}.
\end{equation*}
Hence $\Pi_{m, \theta}$ and $\Pi_{m-1, \theta-1}$ are Murray-von-Neumann equivalent as projections in $C^*(X)$, hence define the same element in $K$-theory.

Iterating the argument and looking at Corollary~\ref{cor:Landau.Dirac.kernel}, we obtain that $\Pi_{m, \theta}$ is Murray-von-Neumann equivalent to the projection onto the kernel of $\slashed{D}^+_{\theta - m - \frac{1}{2}}$. But since $\slashed{D}_{\theta-m-\frac{1}{2}}^-$ has trivial kernel, formula \eqref{CoarseIndexDa} gives that the $K$-theory class $[{\rm ker}(\slashed{D}_{\theta-m-\frac{1}{2}}^+)]$ is equal to $\mathrm{Ind}(\slashed{D}_{\theta-m-\frac{1}{2}})$, the coarse index of the twisted Dirac operator, which generates $K_0(C^*(X))$ by Lemma~\ref{lemma:Dirac.index.nontrivial}.

If $\theta < -m - \frac{1}{2}$, one obtains with the same reasoning that $\Pi_{m, \theta}$ is Murray-von-Neumann equivalent to the projection onto the kernel of $\slashed{D}^-_{\theta + m + \frac{1}{2}}$. In this case, the kernel of $\slashed{D}^+_{\theta + m + \frac{1}{2}}$ is trivial, hence $[\Pi_{m, \theta}] = - \mathrm{Ind}(\slashed{D}_{\theta + m + \frac{1}{2}})$, which is again a generator of $K_0(C^*(X))$.

The Euclidean case, $X = \mathbb{E}$, is exactly analogous and simpler, with the Landau level lowering operators given in \S\ref{sec:Euclidean.Landau}.
\end{proof}

\begin{lemma}\label{lem:standard.half.plane}
Let $X$ be the Euclidean or hyperbolic plane and let $W \subset X$ be a closed half-plane with boundary a geodesic. Then the boundary map 
\begin{equation*}
\partial_0 : \ZZ \cong K_0(C^*(X)) \longrightarrow K_1(C^*_{W}(\partial W))\cong \ZZ
\end{equation*}
constructed in \S\ref{sec:Roe.algebras} is an isomorphism. In particular, the generator $\mathrm{Ind}(\slashed{D})$ of the left hand side is mapped to a generator of  $K_1(C^*_{W}(\partial W))$.
\end{lemma}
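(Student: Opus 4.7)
The plan is to apply the Mayer--Vietoris sequence \eqref{MayerVietoris} to the partition $X = W \cup W'$, where $W'$ is the closed half-plane on the opposite side of the geodesic $\partial W$. This partition is coarsely excisive: since $\partial W$ is a geodesic in $X$, any point within distance $R$ of both $W$ and $W'$ lies within some universal distance (in fact, essentially $R$) of $\partial W$ itself. Hence, by the earlier identification of the localization boundary map \eqref{BoundaryMap1} with the Mayer--Vietoris boundary map, it suffices to understand the latter.

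First I would assemble the $K$-groups appearing in the sequence. By Lemma~\ref{lem:coarse.K.groups}, $K_0(C^*(X)) \cong \ZZ$ is generated by $\mathrm{Ind}(\slashed{D})$ while $K_1(C^*(X)) = 0$. Since $\partial W$ is coarsely equivalent to $\RR$, the standard coarse $K$-theory computation yields $K_0(C^*(\RR)) = 0$ and $K_1(C^*(\RR)) \cong \ZZ$, which by coarse invariance transfers to $K_*(C^*_X(\partial W))$.

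The key step, and where I expect the main technical work, is showing $K_*(C^*(W)) = K_*(C^*(W')) = 0$. The approach is to exhibit each half-plane as a \emph{coarsely flasque} space, i.e.\ to construct a coarse self-map $T : W \to W$ close to the identity whose iterates push bounded sets to infinity while keeping controlled sets controlled; by the Higson--Pedersen--Roe Eilenberg swindle, this forces $K_*(C^*(W)) = 0$. The natural candidate for $T$ is the unit perpendicular-geodesic flow from $\partial W$: for each $p \in W$, let $T(p)$ be the point one unit further along the unique unit-speed geodesic from $\partial W$ through $p$. The estimate $d(T(p), p) = 1$ and the fact that iterates push any bounded set to infinity are immediate. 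The nontrivial requirement is that $T$ preserve controlled sets uniformly in iterates; this is trivial in the Euclidean case since $T$ is an isometry. In the hyperbolic case, using Fermi coordinates $(s, d)$ from $\partial W$ and the distance formula
\begin{equation*}
  \cosh d(p, q) = \cosh(d_1)\cosh(d_2)\cosh(s_1 - s_2) - \sinh(d_1)\sinh(d_2),
\end{equation*}
one verifies that a pair $p,q$ at bounded hyperbolic distance must have projections on $\partial W$ whose separation decays sufficiently fast with their distance to $\partial W$ so that the hyperbolic distance of $T^n(p), T^n(q)$ remains bounded as $n \to \infty$.

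Inserting these computations into \eqref{MayerVietoris} collapses it to the short exact sequence
\begin{equation*}
0 \longrightarrow K_0(C^*(X)) \xrightarrow{\partial_0} K_1(C^*_X(\partial W)) \longrightarrow 0,
\end{equation*}
forcing $\partial_0 : \ZZ \to \ZZ$ to be an isomorphism. In particular, the generator $\mathrm{Ind}(\slashed{D})$ of the source is sent to a generator of $K_1(C^*_X(\partial W)) \cong K_1(C^*_W(\partial W))$, proving the lemma.
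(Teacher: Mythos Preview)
Your Euclidean argument is fine and matches the paper's. The problem is the hyperbolic case: the perpendicular geodesic flow does \emph{not} make the hyperbolic half-plane flasque. Take two points $p=(0,0)$ and $q=(R,0)$ on the boundary geodesic, at distance $R$ from each other. Their projections onto $\partial W$ are themselves, so there is no decay at all. After $n$ iterates of your map $T$, the points are $(0,n)$ and $(R,n)$, and your own distance formula gives
\[
\cosh d(T^n p, T^n q)=\cosh^2(n)\cosh(R)-\sinh^2(n)=\cosh^2(n)(\cosh R -1)+1,
\]
which tends to infinity with $n$. Thus the family $\{T^n\}$ is not uniformly bornologous and the Eilenberg swindle does not apply. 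The paper in fact remarks explicitly that the hyperbolic half-plane is not flasque.

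The paper's proof avoids computing $K_*(C^*(W))$ directly. Instead it uses the reflection across $\partial W$ to identify $K_i(C^*_X(W))\cong K_i(C^*_X(W'))=:A_i$, and then feeds the known groups $K_0(C^*(X))\cong\ZZ$, $K_1(C^*(X))=0$, $K_0(C^*_X(\partial W))=0$, $K_1(C^*_X(\partial W))\cong\ZZ$ into the Mayer--Vietoris sequence to obtain
\[
0\longrightarrow A_0\oplus A_0\longrightarrow \ZZ \xrightarrow{\ \partial_0\ } \ZZ \longrightarrow A_1\oplus A_1 \longrightarrow 0.
\]
A short algebraic check shows that no nonzero abelian group $A$ admits an injection $A\oplus A\hookrightarrow\ZZ$ or a surjection $\ZZ\twoheadrightarrow A\oplus A$, so $A_0=A_1=0$ and $\partial_0$ is an isomorphism. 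This both proves the lemma and yields $K_*(C^*(W))=0$ as a corollary, without ever needing $W$ to be flasque.
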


\begin{proof}
In the Euclidean case, we have $W = [0, \infty) \times \RR$ so that $W$ is {\em flasque}, a property of coarse spaces which entails that $K_i(C^*(W)) = 0$ for $i = 0, 1$. The claim then follows from the six-term sequence \eqref{SixTerm1}. 

For $X = \HH$, the half space $W$ is not flasque, so we have to give another argument (which also works for $X=\EE$ above). Here we can use the Mayer--Vietoris sequence \eqref{MayerVietoris} with $W^\prime = \overline{X \setminus W}$. Reflection at $\partial W$ provides an algebra isomorphism $C^*(W) \cong C^*(W^\prime)$. Moreover, it is known that $K_0(C^*_X(\partial W)) \cong K_0(C^*(\RR)) = 0$ and $K_0(C^*_X(\partial W))\cong K_1(C^*(\RR))\cong\ZZ$ (pp.\ 33 of \cite{Roe-coarse-book}), while we saw that that $K_1(C^*(\HH)) = 0, K_0(C^*(\HH))=\ZZ$ (Lemma~\ref{lem:coarse.K.groups}). Putting these groups into the Mayer--Vietoris sequence therefore yields an exact sequence of the form
\begin{equation*}
\begin{tikzcd}
  0 \ar[r] & A_0 \oplus A_0 \ar[r] & \ZZ \ar[r, "\partial_0"] & \ZZ  \ar[r] & A_1 \oplus A_1 \ar[r] & 0,
\end{tikzcd}
\end{equation*}
where $A_i = K_i(C^*(W))$. But one easily checks that no matter what $A_0$ and $A_1$ are, there are no injective group homomorphisms $A_0 \oplus A_0 \rightarrow \ZZ$ and no surjective group homomorphisms $\ZZ \rightarrow A_1 \oplus A_1$. Hence the only solution to this algebraic problem is $A_0 = A_1 = 0$ and $\partial_0$ an isomorphism. Observe that this also proves that $K_i(C^*(W)) = 0$ for $i= 0, 1$.
\end{proof}

\begin{remark}
While the coarse index of $\slashed{D}_a$ is independent of $a$, it is only realised as ``kernel minus cokernel'' when $a\neq 0$. For instance, for all $a>0$, we have 
\begin{equation*}
[{\rm ker}(\slashed{D}_a)]={\rm Ind}(\slashed{D}_a)={\rm Ind}(\slashed{D}_{-a})=-[{\rm ker}(\slashed{D}_{-a})],
\end{equation*}
 and it is precisely the failure of these expressions at $a=0$ which allows for the ``discontinuity'' in the $K$-theory class of the Dirac kernel as $a$ is varied. Physically, changing $a=\theta-\frac{1}{2}>0$ into $-a=\theta+\frac{1}{2}$ corresponds to reversing the magnetic field $\theta\rightarrow-\theta$. Semiclassically, the cyclotron motion of the Landau level eigenstates reverses accordingly, leading to the chiral current flowing in the opposite direction along the boundary. The quantisation of the latter current may be attributed to $\partial_0[\Pi_{m,\theta}]\in K_1(C^*(\partial W))$ (\S6 of \cite{LTcobordism}). The sign change of $[\Pi_{m,\theta}]$ in Theorem \ref{thm:nontrivial.coarse.index} is consistent with these physical considerations.
\end{remark}

\begin{remark} \label{RemarkImperfekt}
Using the invariance of Roe algebra $K$-theory under coarse equivalences, the proof of Lemma~\ref{lem:standard.half.plane} generalizes to subsets $W \subset X$ that are ``imperfect half spaces'' whose boundary $\partial W$ is only ``roughly a geodesic''. By this, we mean a regular closed subset $W$, with regular complement $W^\prime=\overline{X\setminus W}$, for which there exists a connected, complete, totally geodesic, one-dimensional submanifold $N \subset X$ (in other words, $N$ is the image of a geodesic in $X$) such that the following hold: (1) There exists $R>0$ such that $\partial W \subset B_R(N)$ and $N \subset B_R(\partial W)$; (2) There exists $S>0$ such that $\rho(W^\prime) \subset B_{S}(W)$ and $\rho(W) \subset B_{S}(W^\prime)$. Here for $A\subset X$,  $B_R(A)$ denotes the $R$-ball around $A$ and $\rho: X \to X$ is the isometric reflection across $N$. 
These ensure that $W^\prime$ is coarsely equivalent to $\rho(W)$ and thus to $W$, and that $\partial W=\partial W^\prime=W\cap W^\prime$ is coarsely equivalent to the geodesic $N$.
\end{remark}

\subsection{Gaplessness of half-space hyperbolic Landau Hamiltonians} \label{SectionMainTheorem}

Finally, we explain how the $K$-theoretic non-triviality of the Landau projections, $0\neq [\Pi_{m,\theta}]\in K_0(C^*(X))$, implies our main theorem about the gap-filling phenomenon of Landau Hamiltonians. 

We repeat the setup and then state the main theorem in a more general form. Let $X$ be either the hyperbolic or the Euclidean plane and let $W$ be the closed half-plane lying on one side of a geodesic or, more generally, an ``imperfect half space'' in the sense of Remark~\ref{RemarkImperfekt}. 
Let $H_{\theta, W}$ be the self-adjoint extension of the Landau Hamiltonian $H_\theta$ (initially defined by \eqref{eqn:Landau.Hamiltonian} on $C^\infty_c(\mathring{W})$), which is obtained from imposing either Dirichlet boundary conditions or (if $\partial W$ is sufficiently regular) Neumann boundary conditions or, more generally, a self-adjoint extension satisfying the assumptions stated in Remark~\ref{RemarkGeneralBC}.

\begin{theorem}\label{thm:main.theorem}
The spectrum of $H_{\theta,W}$ has no gaps above the lowest Landau level $\lambda_{0,\theta}=|\theta |$. 
\end{theorem}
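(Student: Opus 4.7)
First, I combine Theorem~\ref{thm:nontrivial.coarse.index} with Lemma~\ref{lem:standard.half.plane}: for any nonempty subset $T\subseteq\{0,1,\ldots,m_{\max}\}$, the projection $P_T:=\sum_{m\in T}\Pi_{m,\theta}$ lies in $C^*(X)$ and has $K$-theory class $|T|\cdot g$, where $g$ generates $K_0(C^*(X))\cong\ZZ$, so $\partial_0([P_T])=|T|\cdot\partial_0(g)\neq 0$ in $K_1(C^*_W(\partial W))\cong\ZZ$. The gap-filling theorem (Theorem~\ref{thm:gap.filling}) then forces at least one of the two spectral gaps of $H_\theta$ adjacent to $S_T:=\{\lambda_{m,\theta}:m\in T\}$ to be completely contained in $\mathrm{spec}(H_{\theta,W})$.

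The idea for pinning down \emph{which} gap is filled is to choose $T$ so that the downward adjacent gap cannot possibly lie in $\mathrm{spec}(H_{\theta,W})$. Taking $T_k:=\{0,1,\ldots,k\}$ for each $k=0,1,\ldots,m_{\max}$, the gap of $H_\theta$ below $S_{T_k}$ is a subinterval of $(-\infty,\lambda_{0,\theta})$, while the gap above is $(\lambda_{k,\theta},\lambda_{k+1,\theta})$ for $k<m_{\max}$ and $(\lambda_{m_{\max},\theta},\tfrac{1}{4}+\theta^2)$ for $k=m_{\max}$ (with the continuous part absent in the Euclidean case, where $k$ ranges over all non-negative integers). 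Since any $H_{\theta,W}$ considered here is non-negative (Remark~\ref{RemarkGeneralBC}), a tiny interval $(-\epsilon,0)$ sits in its resolvent set, so the lower gap cannot possibly be contained entirely in $\mathrm{spec}(H_{\theta,W})$. Gap-filling then forces the upper gap into $\mathrm{spec}(H_{\theta,W})$. Running over all $k$ and invoking closedness of the spectrum yields $[\lambda_{0,\theta},\tfrac{1}{4}+\theta^2]\subseteq\mathrm{spec}(H_{\theta,W})$ (and $[\lambda_{0,\theta},\infty)$ outright in the Euclidean case, finishing the proof there).

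For $X=\HH$ it remains to add the continuous part $[\tfrac{1}{4}+\theta^2,\infty)$ of $\mathrm{spec}(H_\theta)$, which the coarse-index machinery does not directly see. I would dispatch this by a standard Weyl-sequence argument: pick a Weyl sequence for $H_\theta$ at $\lambda\in[\tfrac{1}{4}+\theta^2,\infty)$ with compactly supported entries, and use the transitive action of the isometry group of $\HH$ (under which $H_\theta$ is invariant up to a gauge change) to translate the supports deep into $\mathring{W}$. Since $H_\theta$ and $H_{\theta,W}$ coincide on $C^\infty_c(\mathring{W})$, the translates form a Weyl sequence for $H_{\theta,W}$, whence $\lambda\in\mathrm{spec}(H_{\theta,W})$. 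I expect this last analytic step to be the main obstacle, particularly in the ``imperfect half-space'' regime of Remark~\ref{RemarkImperfekt}, where the isometric translation must be replaced by a direct construction of approximate eigenstates localized far from $\partial W$; everything else is bookkeeping on top of the already-established coarse-index computation.
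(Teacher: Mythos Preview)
Your argument for the Landau levels and the gaps between them is exactly the paper's: combine Theorem~\ref{thm:nontrivial.coarse.index} and Lemma~\ref{lem:standard.half.plane} to get $\partial_0\bigl[\bigoplus_{m=0}^{k}\Pi_{m,\theta}\bigr]\neq 0$, apply Theorem~\ref{thm:gap.filling}, and rule out the lower gap using non-negativity of $H_{\theta,W}$.

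The only divergence is in the continuous spectrum $[\tfrac{1}{4}+\theta^2,\infty)$ for $X=\HH$. The paper does not argue via Weyl sequences; instead it invokes Remark~\ref{rem:equivariant.advantage}, which cites the equivariant result $\mathrm{spec}(H_\theta)\subseteq\mathrm{spec}(H_{\theta,W})$ from \cite{LTcobordism} (valid because $H_\theta$ is invariant under the projective action of a cocompact surface group by magnetic translations). Your direct Weyl-sequence construction is a correct and more self-contained alternative: approximate a Weyl sequence for $H_\theta$ at $\lambda$ by elements of $C^\infty_c(X)$, apply magnetic translations $U_g$ (which commute with $H_\theta$ and preserve norms) to push supports into $\mathring{W}$, and note that $H_{\theta,W}$ agrees with $H_\theta$ on $C^\infty_c(\mathring{W})$. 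This works for imperfect half-spaces too, since the conditions of Remark~\ref{RemarkImperfekt} force the distance to $\partial W$ to be unbounded on $W$, so any fixed compact support can be translated entirely inside $\mathring{W}$. The paper's route is quicker to state but relies on external equivariant machinery; yours is elementary and stays within the paper's non-equivariant framework.
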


\begin{proof}
For the Landau projections $\Pi_{m,\theta}$, we know from Theorem \ref{thm:nontrivial.coarse.index} and Lemma \ref{lem:standard.half.plane} that 
\begin{equation*}
\partial_0[\oplus_{m=0}^{m^\prime}\Pi_{m,\theta}]=(m^\prime+1)\partial_0[{\rm Ind}(\slashed{D})]\neq 0
\end{equation*} for every $m^\prime\in\NN$ in the Euclidean case, and every $m^\prime =0, 1, \ldots, m_{\rm max}$ in the hyperbolic case. Applying this to Theorem \ref{thm:gap.filling}, we deduce that either the gap below the lowest Landau level, or the gap above the $m^\prime$-th Landau level must be filled when passing to $H_{\theta,W}$. By assumption, $H_{\theta,W}$ is bounded below, so it is the latter gap that is filled, and this holds for every $m^\prime$. In view of Remark \ref{rem:equivariant.advantage}, we also have $[\frac{1}{4},\infty)\subset{\rm spec}(H_\theta)\subset {\rm spec}(H_{\theta,W})$, thus no new gaps are introduced into the (already connected) continuous spectrum of $H_\theta$, when passing to $H_{\theta,W}$. Thus, $H_{\theta,W}$ has no spectral gaps at all above the lowest Landau level $\lambda_{0,\theta}=|\theta|$.
\end{proof}


\begin{remark}\label{rem:higher.dimensions}
While we have focused on two-dimensional $X$ in order to answer a concrete open question about half-space hyperbolic Landau Hamiltonians, our methods also work for higher-dimensional $X$. As a simple Euclidean example, Landau levels also arise for magnetic Lapacians on $X=\RR^{2n}$ \cite{Goffeng}. We may take, in the first instance, $W$ to be a half-space with $\partial W\cong\RR^{2n-1}$, then flasqueness of $W$ leads to the Mayer--Vietoris boundary map $\partial_0:K_0(C^*(\RR^{2n}))\rightarrow K_1(C^*(\RR^{2n-1}))$ being an isomorphism (Lemma \ref{lem:standard.half.plane} again holds). Then the gap-filling Theorem \ref{thm:gap.filling} again applies. We mention that such generalisations to higher-dimensional situations were also suggested in \cite{Yuezhao}. 
\end{remark}

\begin{remark} \label{RemarkCoarse}
The assumptions on ``imperfect half spaces'' $W$ from Remark~\ref{RemarkImperfekt} imply that $W$ is in fact quasi-isometric to a standard half plane, 
a much stronger statement than coarse equivalence. However, using our coarse index theory methods \cite{LTcobordism}, Thm.~\ref{thm:main.theorem} may be generalized beyond this class of imperfect half planes. 
\end{remark}

\section*{Ackowledgements}
The authors thank  U.\ Bunke, N.\ Higson,  Y.\ Li, and R.\ Meyer for their helpful correspondence. M.L.\ thanks the SFB 1085 ``Higher Invariants'' for support. G.C.T.\ acknowledges support from Australian Research Council DP200100729, and the University of Adelaide for hosting him.

\bibliography{literature}

\end{document}